\Crefname{figure}{Fig.}{Figs.}
\crefname{equation}{}{}
\Crefname{equation}{Equation}{Equations}
\newtheorem{theorem}{Theorem}
\newtheorem{proposition}{Proposition}
\newtheorem{corollary}{Corollary}
\newtheorem{lemma}{Lemma}
\theoremstyle{definition}
\newtheorem{definition}{Definition}
\DeclareRobustCommand{\idparen@footnote}[1]{\raisebox{0.93pt}{\scalebox{.7}{$\scriptstyle($}}#1\raisebox{0.93pt}{\scalebox{.7}{$\scriptstyle)$}}}
\def\uidfootnote#1#2{#1^{\idparen@footnote{#2}}}
\def\vhat#1{\decosmash\hat{#1}}
\def\vtilde#1{\decosmash\tilde{#1}}
\def\vadjoint#1{\decosmash\adjoint{#1}}
\let\ksum\oplus
\def\cprod{\mathbin{\square}}
\def\vtrsps#1{\vphantom{#1}\smash{\trsps{#1}}}
\newcommand{\IN}{\mathrm{in}}
\newcommand{\OUT}{\mathrm{out}}
\newcommand{\opt}{\mathrm{opt}}
\newcommand{\newword}[1]{#1}
\title{Multi-dimensional Graph Fourier Transform}
\author{%
  Takashi~Kurokawa,~\IEEEmembership{Non-Member,~IEEE,} Taihei~Oki,~\IEEEmembership{Non-Member,~IEEE,} and~Hiromichi~Nagao,~\IEEEmembership{Non-Member,~IEEE}%
  \thanks{%
    Manuscript received ; revised .
    This work was supported by JSPS KAKENHI Grants-in-Aid for Scientific Research (B) (Grant Numbers JP26280006 and JP17H01703).

    T. Kurokawa and T. Oki are with Department of Mathematical Informatics, Graduate School of Information Science and Technology, The University of Tokyo, Tokyo 113-8656, Japan (e-mail: takashi\_kurokawa@mist.i.u-tokyo.ac.jp, taihei\_oki@mist.i.u-tokyo.ac.jp).
    H. Nagao is with Earthquake Research Institute, The University of Tokyo, Tokyo 113-0032, Japan (e-mail: nagaoh@eri.u-tokyo.ac.jp), and with Department of Mathematical Informatics, Graduate School of Information Science and Technology, The University of Tokyo, Tokyo 113-8656, Japan.}
  }
\begin{document}

\maketitle

\begin{abstract}
  Many signals on Cartesian product graphs appear in the real world, such as digital images, sensor observation time series, and movie ratings on Netflix.
  These signals are ``multi-dimensional'' and have directional characteristics along each factor graph.
  However, the existing graph Fourier transform does not distinguish these directions, and assigns 1-D spectra to signals on product graphs.
  Further, these spectra are often multi-valued at some frequencies.
  Our main result is a multi-dimensional graph Fourier transform that solves such problems associated with the conventional GFT.
  Using algebraic properties of Cartesian products, the proposed transform rearranges 1-D spectra obtained by the conventional GFT into the multi-dimensional frequency domain, of which each dimension represents a directional frequency along each factor graph.
  Thus, the multi-dimensional graph Fourier transform enables directional frequency analysis, in addition to frequency analysis with the conventional GFT.
  Moreover, this rearrangement resolves the multi-valuedness of spectra in some cases.
  The multi-dimensional graph Fourier transform is a foundation of novel filterings and stationarities that utilize dimensional information of graph signals, which are also discussed in this study.
  The proposed methods are applicable to a wide variety of data that can be regarded as signals on Cartesian product graphs.
  This study also notes that multivariate graph signals can be regarded as 2-D univariate graph signals.
  This correspondence provides natural definitions of the multivariate graph Fourier transform and the multivariate stationarity based on their 2-D univariate versions.
\end{abstract}
\begin{IEEEkeywords}
  Filtering, graph signal processing, multi-dimensional signal processing, stationarity.
\end{IEEEkeywords}

\section{Introduction}\label{sec:introduction}
Signals located on the vertices of weighted graphs, known as graph signals, appear in many situations, e.g., measurements on sensor network graphs, electrical potential on neural network graphs, and RGB on pixels on grid graphs.
Recently, many graph signal processing (GSP) methodologies have been proposed for graph signals, e.g., graph Fourier transform (GFT)~\citep{Taubin1995,Shuman2013,Shuman2016,Loukas2016,Hammond2011,Agaskar2013,Sandryhaila2014a,Sandryhaila2014}, windowed GFT~\citep{Shuman2016}, graph wavelet transform~\citep{Hammond2011}, and spectral filtering~\citep{Zhang2008,Grady2010}.
These GSP methodologies extended conventional signal processing techniques applicable to time-series data.

This study focuses on signals on a Cartesian product of graphs, which are termed ``multi-dimensional graph signals'' hereafter.
In summary, a Cartesian product of $n$ graphs is an ``$n$-dimensional graph'' whose each dimension is formed by each factor graph (its definition will be introduced in \cref{subsec:cpgraph}).
\Cref{fig:signal_intro} shows an example of a two-dimensional (2-D) graph signal.
The graph seems to have two dimensions: a horizontal graph of $5$ vertices (solid lines) and a vertical path of $4$ vertices (dotted lines).
We mention three practical 2-D graph signals.
First, images are signals on a grid graph that is a product of a row and column path graph.
Second, sensor observation time series are signals on a product of a path graph (model of time axis) and a sensor network graph (model of spatial correlation).
Third, Netflix movie ratings are signals on a product of a movie-similarity graph and a user-similarity graph.

\begin{figure}
  \captionsetup[subfigure]{justification=centering}
  \centering
  \begin{subfigure}[t]{0.7\linewidth}
    \centering
    \includegraphics[width=\columnwidth]{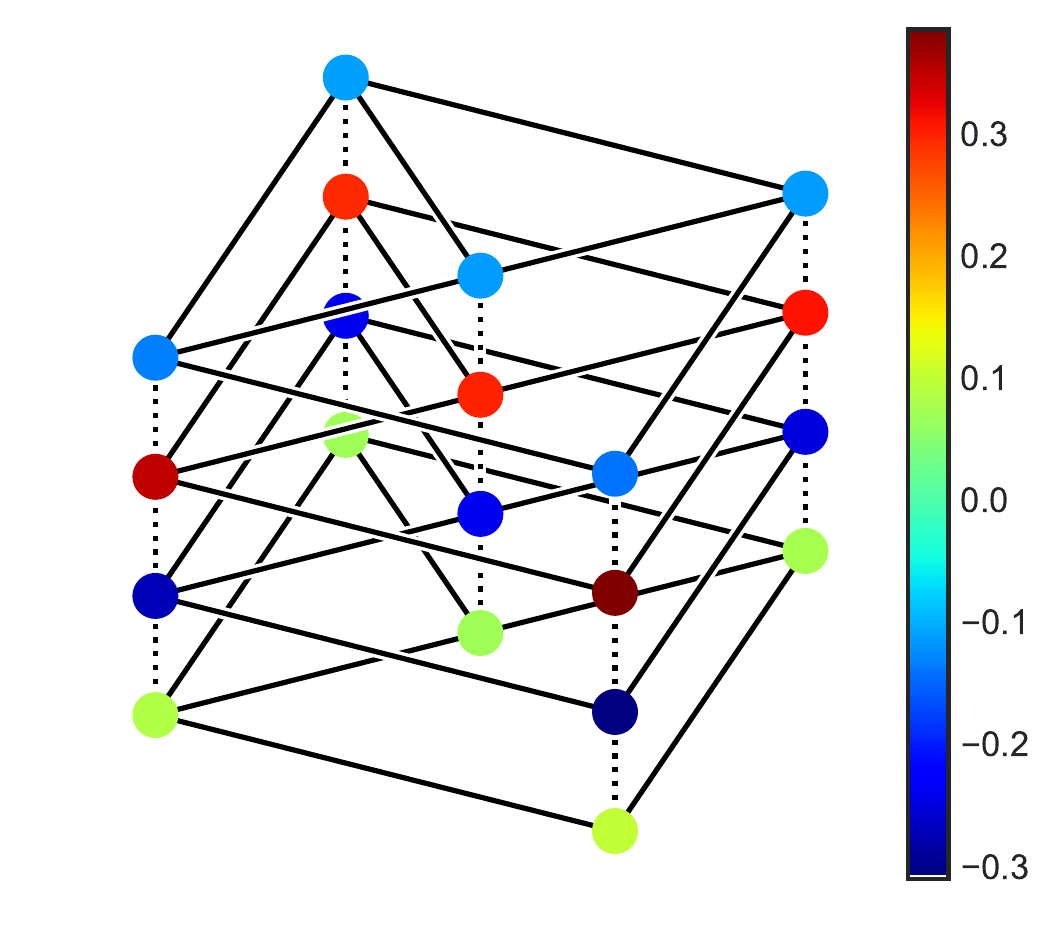}
    \caption{\vphantom{2-D graph signal}}
    \label{fig:signal_intro}
  \end{subfigure}
  \begin{subfigure}[t]{0.47\linewidth}
    \centering
    \includegraphics[width=\columnwidth]{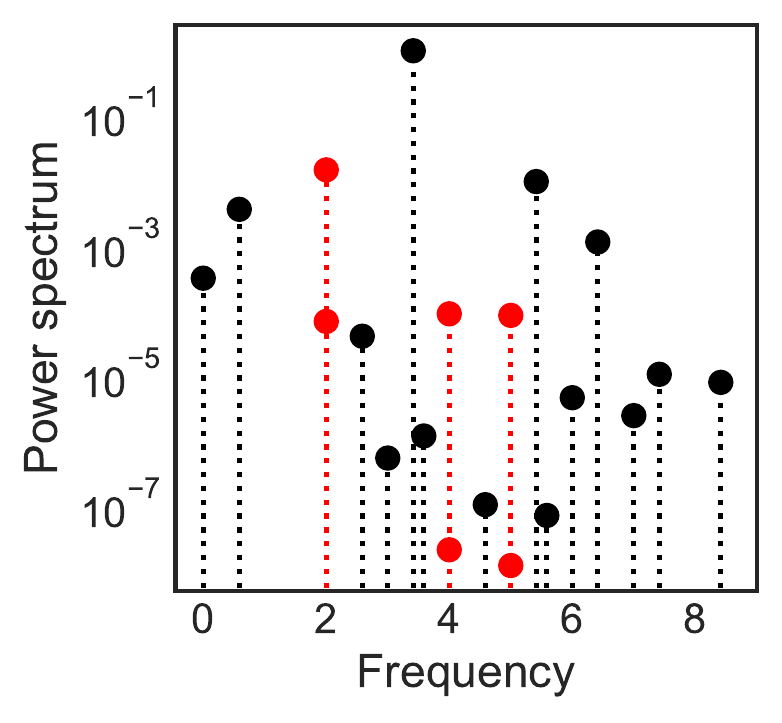}
    \caption{\vphantom{Power spectrum of the signal shown in \cref{fig:signal_intro} obtained by a GFT}}
    \label{fig:1dspec_intro}
  \end{subfigure}
  \begin{subfigure}[t]{0.505\linewidth}
    \centering
    \includegraphics[width=\columnwidth]{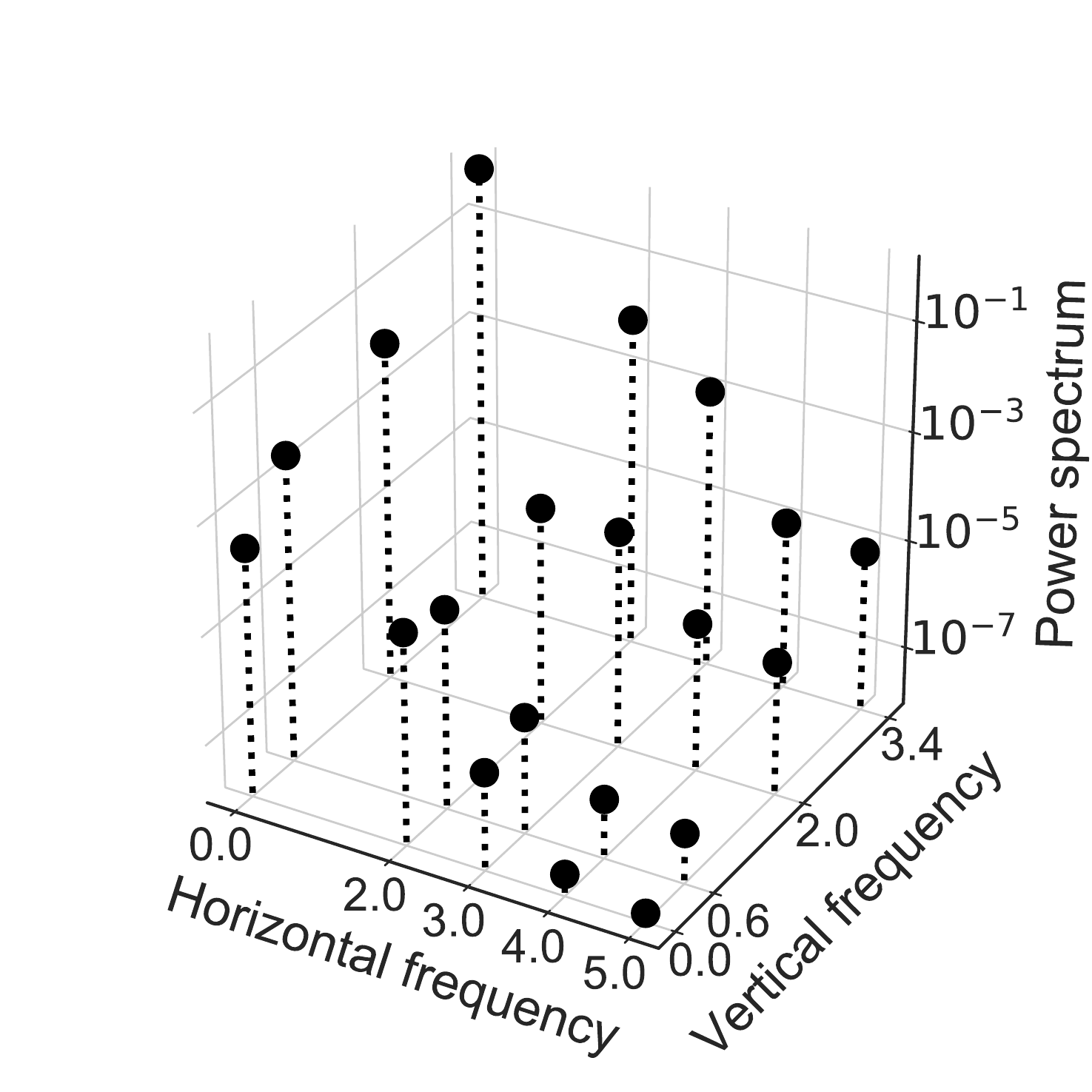}
    \caption{\vphantom{Power spectrum of the signal shown in \cref{fig:signal_intro} obtained by a 2-D GFT}}
    \label{fig:2dspec_intro}
  \end{subfigure}
  \caption{\subref{fig:signal_intro} Example of 2-D graph signal. The intensity of the signal is indicated by color on each vertex. \subref{fig:1dspec_intro} 1-D power spectrum of \subref{fig:signal_intro} obtained by a conventional graph Fourier transform (GFT), and \subref{fig:2dspec_intro} 2-D power spectrum of \subref{fig:signal_intro} obtained by the proposed multi-dimensional graph Fourier transform (MGFT). The 1-D spectrum is double-valued at red points, whereas the 2-D spectrum is single-valued everywhere.}
  \label{fig:signal_and_spec_intro}
\end{figure}

However, the conventional GFT is inappropriate for multi-dimensional graph signals.
First, the spectra of multi-dimensional graph signals obtained by the GFT cannot represent directional frequency characteristics, which makes an investigation of the original signals inadequate.
Signals on a product graph have ``directional'' characteristics along each factor graph, but the GFT maps the multi-dimensional signals to 1-D graph spectra, ignoring the directional information.
The disadvantage of the conventional GFT can be easily confirmed in \cref{fig:1dspec_intro}.
The power spectrum loses directional frequency characteristics, i.e., the original signal varies moderately in the horizontal direction and drastically in the vertical direction.
Second, the spectra of multi-dimensional graph signals obtained by the GFT tend to be multi-valued at a particular frequency.
The GFT spectra are functions that range from discrete frequencies to complex numbers.
When the Laplacian matrix of a given graph has non-distinct eigenvalues, the spectrum functions are not well-defined at a frequency of multiple eigenvalues, i.e., it associates many values to the frequency.
Unfortunately, regarding Cartesian product graphs, the eigenvalues of a Laplacian matrix frequently degenerate.
For example, the Laplacian matrix of the Cartesian product of two isomorphic graphs has non-distinct eigenvalues.
Indeed, the graph in \cref{fig:signal_intro} has multiple eigenvalues, $2$, $4$, and $5$, so that the spectrum in \cref{fig:1dspec_intro} is double-valued at these frequencies.

This study proposes a multi-dimensional graph Fourier transform (MGFT) for multi-dimensional graph signals that solves the aforementioned problems associated with the conventional GFT.
When given an $n$-D graph signal, the MGFT rearranges the 1-D spectrum obtained by the GFT into the $n$-D frequency domain, and provides the $n$-D spectrum of the signal.
Each dimension of the frequency space indicates a ``directional frequency along each factor graph.''
In this manner, the frequency characteristics acquired by the proposed MGFT include directional information, in addition to the frequency characteristics provided by the conventional GFT.
Therefore, the MGFT provides deeper frequency analysis of the multidimensional graph signals than the conventional GFT.
Moreover, the multi-dimensional rearrangement of 1-D spectra resolves the multi-valuedness of the spectra and generates well-defined spectrum functions under the conditions mentioned in \cref{subsec:mgft}.
\Cref{fig:2dspec_intro} shows the power spectrum of the signal in \cref{fig:signal_intro} obtained by the MGFT.
The spectrum in \cref{fig:2dspec_intro} is a 2-D rearrangement of the 1-D spectrum in \cref{fig:1dspec_intro}, indicates the anisotropic frequency characteristics of the original signal, and is single-valued at any frequency.
In addition, the MGFT is as fast as an efficient GFT algorithm on Cartesian product graphs~\citep{Sandryhaila2014}.

Aside from the significance of the MGFT itself, the MGFT is the foundation of the various GSP tools provided in this study that utilize dimensional information of graph signals.
We propose multi-dimensional graph spectral and optimization filtering, with which we can design directional frequency characteristics of multi-dimensional graph signals.
We also propose a factor-graph-wise stationarity and a directional stationarity of multi-dimensional graph signals that focus on stationarities along factor graphs. Further, we discuss the mutual relationships of the proposed stationarities and an existing stationarity in~\citep{Segarra2017,Perraudin2017}.

This study points out that multivariate signals on graphs can be regarded as signals on Cartesian product graphs.
Therefore, our study of multi-dimensional graph signals transfers to those of multivariate graph signals.
In this manner, we propose the stationarity of multivariate graph signals.

There have been several previous studies conducted on GSP methodologies for signals on a Cartesian product graph.
\Citet{Sandryhaila2014} proposed an efficient algorithm of an adjacency-based GFT when it is applied to signals on a product graph.
The algorithm reduces the high computational cost by utilizing an algebraic property of Cartesian products, which is similar to the proposed MGFT.
However, the two methods are essentially different in the following manner: their GFT eventually provides 1-D spectra, whereas the MGFT provides multi-dimensional spectra.
Other studies~\citep{Loukas2016,Loukas2017} modeled the periodic time axis using a cycle graph, and indicated that periodic temporal signals on a graph could be regarded as data on the Cartesian product of the graph and the cycle graph.
They proposed a set of methodologies consisting of ``joint graph and temporal Fourier transform,'' joint filtering, and some novel stationarities for such signals.
The set of methodologies consisting of the proposed multi-dimensional GFT, filtering, and stationarities generalizes their methodologies.

The contents of this study are described as follows.
\Cref{sec:gsp_framework} overviews two key ingredients of an MGFT: a conventional GFT and a Cartesian product of graphs.
\Cref{sec:mgft} proposes the MGFT.
The MGFT motivates several new filterings in \cref{sec:filtering} and stationarities in \cref{sec:stationarity} for multi-dimensional graph signals.
\Cref{sec:multivariate_gsp} discusses multivariate graph signals related to multi-dimensional graph signals.
\Cref{sec:conclusion} concludes the paper and mentions several future works.

\section{Preliminaries}\label{sec:gsp_framework}

\subsection{Notations}\label{subsec:notations}
Let $\setR$ be the set of all real numbers, $\setRnonneg$ be the set of all nonnegative real numbers, and $\setC$ be the set of all complex numbers.
A complex conjugate of a scalar $a$ is denoted by $\wideconj{a}$, an element-wise complex conjugate of a vector $\vec{a}$ is denoted by $\wideconj{\vec{a}}$, and that of a matrix $\mat{A}$ is denoted by $\wideconj{\mat{A}}$.
For a matrix $\mat{A}$, let $\trsps{\mat{A}}$ be a transpose, $\adjoint{\mat{A}}$ be a Hermitian transpose, $\mat{A}^{-1}$ be an inverse, and $\tr\mat{A}$ be a trace of $\mat{A}$.
A Kronecker product of $\mat{A}\in\setMatrix{m}{n}{\setC}$ and $\mat{B}\in\setMatrix{s}{t}{\setC}$ is a matrix given by
\begin{equation}
  \mat{A}\kprod\mat{B}
  =
  \begin{pmatrix}
    a_{11}\mat{B} & \cdots & a_{1n}\mat{B}
    \\
    \vdots & \ddots & \vdots
    \\
    a_{m1}\mat{B} & \cdots & a_{mn}\mat{B}
  \end{pmatrix}
  \in\setMatrix{ms}{nt}{\setC},
\end{equation}
where $a_{ij}$ is the $\seqprn{i,j}$-th element of $\mat{A}$.
For $p>0$, a $p$-norm of a univariate function $f$ on some discrete domain is
\begin{align}
  \norm[p]{f}=\paren{\sum_i\absprn{\app{f}{i}}^p}^{1/p},\label{eq:unifunction_norm}
\end{align}
and a $p$-norm of a bivariate function $g$ on some discrete domain is
\begin{align}
  \norm[p]{g}=\paren{\sum_i\sum_j\absprn{\app{g}{i,j}}^p}^{1/p}.
\end{align}
For a bivariate function $g$, let $\app{g}{i,\placeholder}$ be a univariate version of $g$ with the first variable fixed as $i$ and $\app{g}{\placeholder,j}$ a univariate version of $g$ with the second variable fixed as $j$.

\subsection{Graph Fourier transform}\label{subsec:gft}

Let $G=\seqprn{V,E,w}$ be \newword{an undirected weighted graph} with vertex set $V=\setprn{0,\ldots,N-1}$, edge set $E$, and weight function $\funcdoms{w}{V\times V}{\setRnonneg}$.
The weight function $w$ is symmetric and satisfies $\app{w}{i,j}=0$ for any $\setprn{i,j}\notin E$.
In this study, we assume that all graphs are simple, i.e., have no loops and no multiple edges.

The following three matrices associated with $G$ are significant: an adjacency matrix $\mat{W}=\rawparen{\app{w}{i,j}}_{i,j=0,\ldots,N-1}$, a degree matrix $\mat{D}$ whose $i$-th diagonal element is $\app{d}{i}=\sum_{j=0}^{N-1}\app{w}{i,j}$, and a Laplacian matrix $\mat{L}=\mat{D}-\mat{W}$.
In particular, the Laplacian matrix is necessary for GFTs.

Because the Laplacian matrix $\mat{L}$ is real, symmetric, and positive-semidefinite, it has nonnegative eigenvalues $\lambda_0,\ldots,\lambda_{N-1}$ and the corresponding orthonormal eigenfunctions $\funcdoms{u_0,\ldots,u_{N-1}}{V}{\setC}$ satisfying
\begin{align}
  \mat{L}
  \begin{pmatrix}
    \app{u_k}{0}
    \\
    \vdots
    \\
    \app{u_k}{N-1}
  \end{pmatrix}
  =\lambda_k
  \begin{pmatrix}
    \app{u_k}{0}
    \\
    \vdots
    \\
    \app{u_k}{N-1}
  \end{pmatrix}
\end{align}
for $k=0,\ldots,N-1$.
Here, the orthonormality means that $\sum_{i=0}^{N-1}\app{u_k}{i}\wideconj{\app{u_l}{i}}=\app{\delta}{i,j}$ holds for any $k,l=0,\ldots,N-1$, where $\delta$ is the delta function, i.e., $\app{\delta}{i,j}$ is $1$ if $i=j$ and $0$ otherwise.
This study supposes that eigenvalues are sorted in ascending order like $\lambda_0\leq\cdots\leq\lambda_{N-1}$.
Note that $\lambda_0$ is strictly $0$ because all row-wise sums of $\mat{L}$ are equal to zero.
We denote a matrix spectrum $\setprn{\lambda_k}_{k=0,\ldots,N-1}$ by $\app{\sigma}{\mat{L}}$.

\newword{A graph Fourier transform (GFT)}~\citep{Taubin1995,Shuman2013,Shuman2016,Loukas2016,Hammond2011,Agaskar2013} of a graph signal $\funcdoms{f}{V}{\setR}$ is $\funcdoms{\vhat{f}}{\app{\sigma}{\mat{L}}}{\setC}$ defined by
\begin{equation}\label{eq:def_GFT}
  \app{\vhat{f}}{\lambda_k}=\iprod{f}{u_k}=\sum_{i=0}^{N-1}\app{f}{i}\wideconj{\app{u_k}{i}}
\end{equation}
for $k=0,\ldots,N-1$.
Considering the functions $\rawcurlybrace{u_k}_{k=0,\ldots,N-1}$ as signals on $G$, the GFT is a signal expansion in terms of these \newword{eigensignals}.
Then an inverse GFT is given by
\begin{equation}
  \app{f}{i}=\sum_{k=0}^{N-1}\app{\vhat{f}}{\lambda_k}\app{u_{k}}{i}.
\end{equation}
Note that the GFT on the cycle graphs is equivalent to \newword{the discrete Fourier transform (DFT)}~\citep{Taubin1995}.

When a graph Laplacian has non-distinct eigenvalues, we should remember that spectrum functions generated by the GFT are not well-defined, i.e., multi-valued.
Suppose a Laplacian matrix has two orthonormal eigensignals $u$ and $u^\prime$ corresponding to the same eigenvalue $\lambda$.
Then, for any signal $f$ on the graph, the spectral component $\app{\vhat{f}}{\lambda}$ is double-valued by $\iprod{f}{u}$ and $\iprod{f}{u^\prime}$.

In~\citep{Sandryhaila2013,Sandryhaila2014a,Sandryhaila2014}, an alternative GFT is introduced from the algebraic signal processing (ASP) approach (see~\citep{Puschel2012} for an overview of ASP).
This type of GFT expands a graph signal in terms of the eigenfunctions of an adjacency matrix.
Suppose the adjacency matrix $\mat{W}$ has eigenvalues $\rawcurlybrace{\mu_k}_{k=0,\ldots,N-1}$ and corresponding normal eigenfunctions $\rawcurlybrace{v_k}_{k=0,\ldots,N-1}$ on the vertex set.
For a signal $f$ on the graph $G$, the ASP approach defines a GFT applied to $f$ as a spectrum $\vhat{f}$ on $\app{\sigma}{\mat{W}}$ satisfying
\begin{equation}
  \app{f}{i}=\sum_{k=0}^{N-1}\app{\vhat{f}}{\mu_k}\app{v_k}{i}
\end{equation}
for $i=0,\ldots,N-1$.

In this study, we refer to this transform as \newword{the adjacency-based GFT} and refer to the transform in \cref{eq:def_GFT} as \newword{the Laplacian-based GFT} or simply, GFT.
These two GFTs usually give different spectra for the same graph signal.
We mainly discuss the Laplacian-based GFT and its 2-D extension, but occasionally we refer to the adjacency-based GFT for comparison.

\subsection{Cartesian product graph}\label{subsec:cpgraph}

\newword{A Cartesian product $G_1\cprod G_2$ of graphs $G_1=\seqprn{V_1,E_1,w_1}$ and $G_2=\seqprn{V_2,E_2,w_2}$} is a graph with vertex set $V_1\times V_2$, edge set $E$ satisfying
\begin{align}
  &\setprn{\seqprn{i_1,i_2},\seqprn{j_1,j_2}}\in E
  \\
  &\iff\sqbracket{\setprn{i_1,j_1}\in E_1,\:i_2=j_2}\OR\sqbracket{i_1=j_1,\:\setprn{i_2,j_2}\in E_2},
\end{align}
and weight function $w$ defined by
\begin{align}
  \app{w}{\seqprn{i_1,i_2},\seqprn{j_1,j_2}}
  =
  \app{w_1}{i_1,j_1}\app{\delta}{i_2,j_2}
  +\app{\delta}{i_1,j_1}\app{w_2}{i_2,j_2}.
\end{align}
The graphs $G_1$ and $G_2$ are called \newword{factor graphs of $G_1\cprod G_2$}.
\Cref{fig:Cartesian_product} shows one example of a Cartesian product operation.
For other examples, if factor graphs are \newword{cycle graphs}, their product is \newword{a two-dimensional torus graph};
if factor graphs are \newword{path graphs}, their product is \newword{a two-dimensional grid graph}.

\begin{figure}
  \centering
  \begin{subfigure}[b]{0.2\linewidth}
    \centering
    \includegraphics[width=0.7\linewidth]{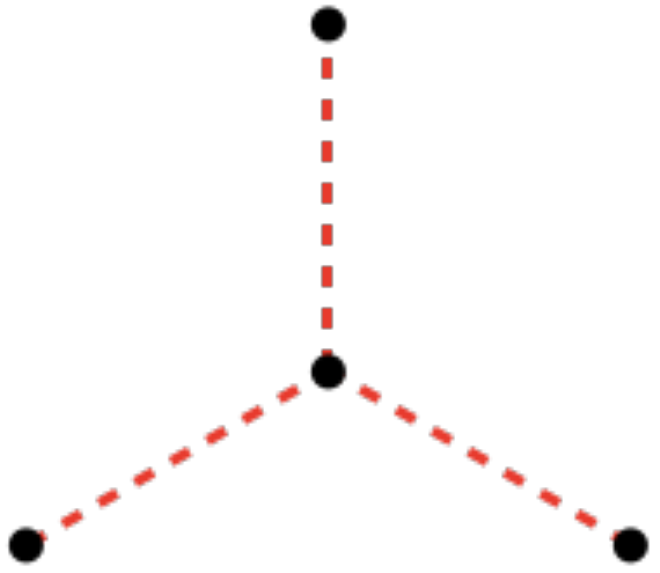}
    \caption{\vphantom{$G_1$}}
    \label{fig:factor_G1}
  \end{subfigure}
  \begin{subfigure}[b]{0.2\linewidth}
    \centering
    \includegraphics[width=0.6\linewidth]{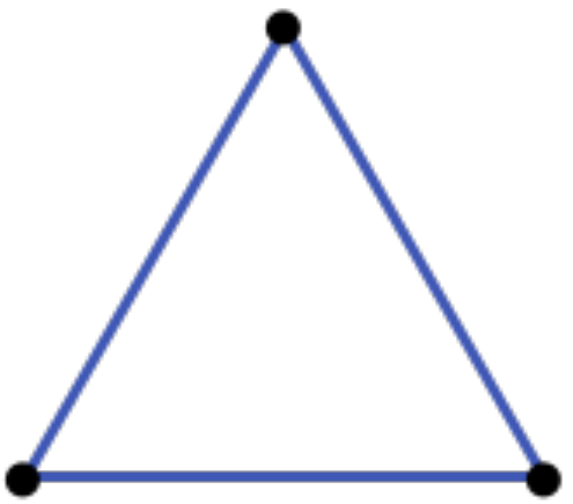}
    \caption{\vphantom{$G_2$}}
    \label{fig:factor_G2}
  \end{subfigure}
  \begin{subfigure}[t]{0.5\linewidth}
    \centering
    \includegraphics[width=0.9\linewidth]{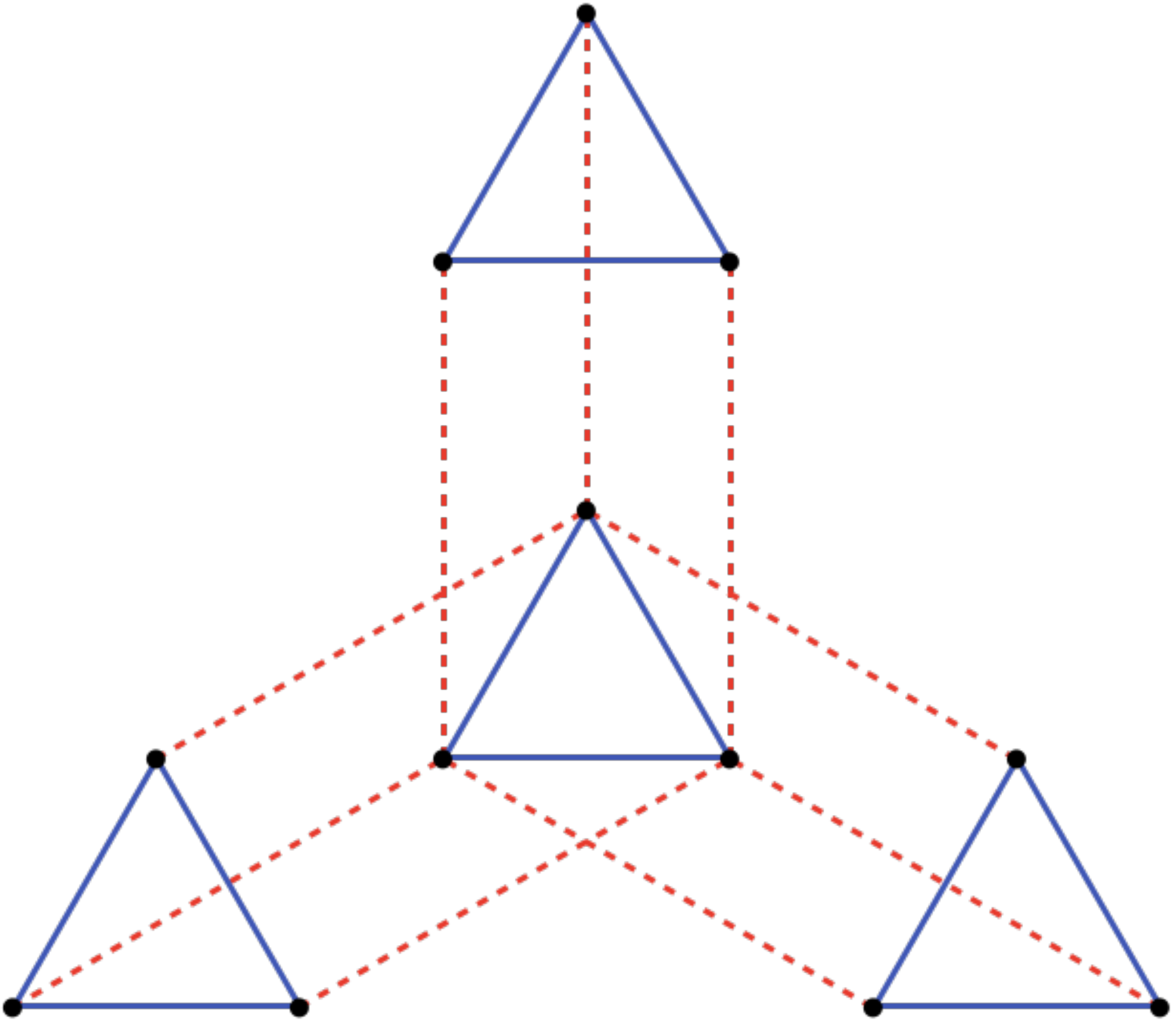}
    \caption{\vphantom{$G_1\cprod G_2$}}
    \label{fig:product_G}
  \end{subfigure}
  \caption{\subref{fig:factor_G1}\subref{fig:factor_G2} Graphs and \subref{fig:product_G} their Cartesian product graph. Red dotted edges in \subref{fig:product_G} come from \subref{fig:factor_G1} and blue solid edges come from \subref{fig:factor_G2}.}
  \label{fig:Cartesian_product}
\end{figure}

An adjacency matrix, degree matrix, and Laplacian matrix of a Cartesian product graph can be represented by those of its factor graphs.
For $n=1,2$, suppose that a factor graph $G_n$ with vertex set $V_n=\setprn{0,\ldots,N_n-1}$ has the adjacency matrix $\mat{W}_n$, the degree matrix $\mat{D}_n$, and the Laplacian matrix $\mat{L}_n$.
Then, ordering the vertices in $V_1\times V_2$ lexicographically, i.e., like $\seqprn{0,0},\seqprn{0,1},\seqprn{0,2},\ldots,\seqprn{N_1-1,N_2-1}$, the adjacency matrix, degree matrix, and Laplacian matrix of $G_1\cprod G_2$ are expressed as $\mat{W}_1\ksum\mat{W}_2$, $\mat{D}_1\ksum\mat{D}_2$, and $\mat{L}_1\ksum\mat{L}_2$, respectively.
Here, the operator $\ksum$ is \newword{a Kronecker sum} defined by $\mat{A}\ksum\mat{B}=\mat{A}\kprod\matid[n]+\matid[m]\kprod\mat{B}$ for matrices $\mat{A}\in\setMatrix{m}{m}{\setR}$ and $\mat{B}\in\setMatrix{n}{n}{\setR}$, where $\matid[n]$ is the identity matrix of size $n$.

A desirable property of the Kronecker sum allows an eigenproblem involving a Laplacian matrix of a product graph to be broken down into eigenproblems involving that of each factor graph.
Supposing the Laplacian matrix $\mat{L}_n$ has nonnegative eigenvalues $\rawcurlybrace{\uid{\lambda_k}{n}}_{k=0,\ldots,N_n-1}$ and orthonormal eigenfunctions $\rawcurlybrace{\uid{u_k}{n}}_{k=0,\ldots,N_n-1}$ for $n=1,2$, the Kronecker sum $\mat{L}_1\ksum\mat{L}_2$ has an eigenvalue $\uid{\lambda_{k_1}}{1}+\uid{\lambda_{k_2}}{2}$ and the corresponding eigenfunction $\funcdoms{\uid{u_{k_1}}{1}\uid{u_{k_2}}{2}}{V_1\times V_2}{\setC}$ satisfying
\begin{align}
  &\paren{\mat{L}_1\ksum\mat{L}_2}
  \begin{pmatrix}
    \app{\uid{u_{k_1}}{1}}{0}\app{\uid{u_{k_2}}{2}}{0}
    \\[4pt]
    \app{\uid{u_{k_1}}{1}}{0}\app{\uid{u_{k_2}}{2}}{1}
    \\
    \vdots
    \\
    \app{\uid{u_{k_1}}{1}}{N_1-1}\app{\uid{u_{k_2}}{2}}{N_2-1}
  \end{pmatrix}
  \\
  &=
  \paren{\uid{\lambda_{k_1}}{1}+\uid{\lambda_{k_2}}{2}}
  \begin{pmatrix}
    \app{\uid{u_{k_1}}{1}}{0}\app{\uid{u_{k_2}}{2}}{0}
    \\[4pt]
    \app{\uid{u_{k_1}}{1}}{0}\app{\uid{u_{k_2}}{2}}{1}
    \\
    \vdots
    \\
    \app{\uid{u_{k_1}}{1}}{N_1-1}\app{\uid{u_{k_2}}{2}}{N_2-1}
  \end{pmatrix}
\end{align}
for any $k_1=0,\ldots,N_1-1$ and $k_2=0,\ldots,N_2-1$.
The eigenvalues $\rawcurlybrace{\uid{\lambda_{k_1}}{1}+\uid{\lambda_{k_2}}{2}}_{k_1,k_2}$ are nonnegative and the eigenfunctions $\rawcurlybrace{\uid{u_{k_1}}{1}\uid{u_{k_2}}{2}}_{k_1,k_2}$ are orthonormal.
These are easily deduced from several basic properties of the Kronecker product (see e.g.~\citep[chap. 13]{Laub2005}).

An eigenproblem concerning an adjacency matrix of a product graph can be broken down in the same way.
Supposing the adjacency matrix $\mat{W}_n$ has eigenvalues $\rawcurlybrace{\uid{\mu_k}{n}}_{k=0,\ldots,N_n-1}$ and orthonormal eigenfunctions $\rawcurlybrace{\uid{v_k}{n}}_{k=0,\ldots,N_n-1}$ for $n=1,2$, the Kronecker sum $\mat{W}_1\ksum\mat{W}_2$ has eigenvalues $\rawcurlybrace{\uid{\mu_{k_1}}{1}+\uid{\mu_{k_2}}{2}}_{k_1,k_2}$ and corresponding eigenfunctions $\rawcurlybrace{\uid{v_{k_1}}{1}\uid{v_{k_2}}{2}}_{k_1,k_2}$ on $V_1\times V_2$.

\section{Multi-dimensional graph signal transforms}\label{sec:mgft}

\subsection{Multi-dimensional graph Fourier transform}\label{subsec:mgft}

Consider the Laplacian-based GFT on a Cartesian product graph.
For $n=1,2$, let $G_n$ be an undirected weighted graph with vertex set $V_n=\setprn{0,\ldots,N_n-1}$, and suppose that its graph Laplacian $\mat{L}_n$ has ascending eigenvalues $\rawcurlybrace{\uid{\lambda_k}{n}}_{k=0,\ldots,N_n-1}$ and the corresponding orthonormal eigenfunctions $\rawcurlybrace{\uid{u_k}{n}}_{k=0,\ldots,N_n-1}$.
Due to the previous discussion about product graphs, the GFT of a graph signal $\funcdoms{f}{V_1\times V_2}{\setR}$ on the product graph $G_1\cprod G_2$ is a spectrum $\funcdoms{\hat{f}}{\app{\sigma}{\mat{L}_1\ksum\mat{L}_2}}{\setC}$ given by
\begin{equation}\label{eq:GFT_on_product}
  \app{\vhat{f}}{\uid{\lambda_{k_1}}{1}+\uid{\lambda_{k_2}}{2}}
  =\sum_{i_1=0}^{N_1-1}\sum_{i_2=0}^{N_2-1}\app{f}{i_1,i_2}\wideconj{\app{\uid{u_{k_1}}{1}}{i_1}\app{\uid{u_{k_2}}{2}}{i_2}}
\end{equation}
for $k_1=0,\ldots,N_1-1$ and $k_2=0,\ldots,N_2-1$, and its inverse is
\begin{equation}\label{eq:iGFT_on_product}
  \app{f}{i_1,i_2}
  =\sum_{k_1=0}^{N_1-1}\sum_{k_2=0}^{N_2-1}\app{\vhat{f}}{\uid{\lambda_{k_1}}{1}+\uid{\lambda_{k_2}}{2}}\app{\uid{u_{k_1}}{1}}{i_1}\app{\uid{u_{k_2}}{2}}{i_2}
\end{equation}
for $i_1=0,\ldots,N_1-1$ and $i_2=0,\ldots,N_2-1$.

Considering the GFT on a product graph above, it seems natural to define the spectrum not as a univariate function on $\app{\sigma}{\mat{L}_1\ksum\mat{L}_2}$, but as a bivariate function on $\app{\sigma}{\mat{L}_1}\times\app{\sigma}{\mat{L}_2}$.
Now, we regard a signal on a Cartesian product graph as a ``two-dimensional signal'' and propose a two-dimensional GFT which gives a ``two-dimensional spectrum.''

\begin{definition}[Two-dimensional graph Fourier transform]
  A two-dimensional GFT (2-D GFT) of a signal $\funcdoms{f}{V_1\times V_2}{\setR}$ on a Cartesian product graph $G_1\cprod G_2$ is a spectrum $\funcdoms{\vhat{f}}{\app{\sigma}{\mat{L}_1}\times\app{\sigma}{\mat{L}_2}}{\setC}$ defined by
  \begin{equation}
    \app{\vhat{f}}{\uid{\lambda_{k_1}}{1},\uid{\lambda_{k_2}}{2}}
    =\sum_{i_1=0}^{N_1-1}\sum_{i_2=0}^{N_2-1}\app{f}{i_1,i_2}\wideconj{\app{\uid{u_{k_1}}{1}}{i_1}\app{\uid{u_{k_2}}{2}}{i_2}}
  \end{equation}
  for $k_1=0,\ldots,N_1-1$ and $k_2=0,\ldots,N_2-1$, and its inverse is given by
  \begin{equation}
    \app{f}{i_1,i_2}
    =\sum_{k_1=0}^{N_1-1}\sum_{k_2=0}^{N_2-1}\app{\vhat{f}}{\uid{\lambda_{k_1}}{1},\uid{\lambda_{k_2}}{2}}\app{\uid{u_{k_1}}{1}}{i_1}\app{\uid{u_{k_2}}{2}}{i_2}
  \end{equation}
  for $i_1=0,\ldots,N_1-1$ and $i_2=0,\ldots,N_2-1$.
\end{definition}

Note that the 2-D GFT is represented as a chain of matrix-matrix multiplications.
By using $N_1\times N_2$ matrices $\mat{F}=\rawparen{\app{f}{i_1,i_2}}_{i_1,i_2}$ and $\vhat{\mat{F}}=\rawparen{\app{\vhat{f}}{\vuid{\lambda_{k_1}}{1},\vuid{\lambda_{k_2}}{2}}}_{k_1,k_2}$, the 2-D GFT applied to the signal $f$ is expressed as
\begin{align}
  \vhat{\mat{F}}=\adjoint{\mat{U}_1}\mat{F}\wideconj{\mat{U}}_2,\label{eq:2DGFT_mat}
\end{align}
where $\mat{U}_n$ is an $N_n\times N_n$ unitary matrix with $\seqprn{i,k}$-th element $\app{\uid{u_k}{n}}{i}$ for $n=1,2$.
Then its inverse is given by
\begin{align}
  \mat{F}=\mat{U}_1\vhat{\mat{F}}\trsps{\mat{U}_2}.\label{eq:i2DGFT_mat}
\end{align}

The 2-D GFT is related to existing transformations as follows.
First, when both factor graphs are cycle graphs, the 2-D GFT can be equivalent to the 2-D DFT.
Second, when one factor graph is a cycle graph expressing a periodic time axis, the 2-D GFT is called \newword{a joint graph and temporal Fourier transform}~\citep{Loukas2016}.
The proposed 2-D GFT generalizes these existing transformations.

For a signal on product graphs, the proposed 2-D GFT provides the following advantages over the conventional GFT: directional frequency analysis, multi-valuedness resolution, and reduced computational time.
The remainder of this subsection explains these advantages.

First, the 2-D GFT enables us to analyze graph signals in terms of the frequency characteristics along each factor graph.
Because the 2-D GFT parallels the 2-D Fourier transform, we expect that the $n$-th variable of 2-D spectra behaves as a ``frequency along the $n$-th factor graph.''
That is, if a spectrum $\app{\vhat{f}}{\vuid{\lambda}{1},\placeholder}$ at a large $\uid{\lambda}{1}$ is dominant, a signal $f$ should drastically fluctuate along the graph $G_1$, whereas if $\app{\vhat{f}}{\vuid{\lambda}{1},\placeholder}$ at a small $\uid{\lambda}{1}$ is dominant, $f$ should gradually change along $G_1$.

See \cref{fig:various_signals} for an example where the expectation is likely real.
It shows several signals on a product of a path graph $G_1$ and a wheel graph $G_2$ shown in \cref{fig:frequency_cpgraph} (vertex domain representation) and their power spectra obtained by our 2-D GFT (frequency domain representation).
The vertices of $G_1$ are indexed as \cref{fig:path}, and those of $G_2$ are indexed as \cref{fig:wheel}.
All edges of $G_1$, $G_2$, and $G_1\cprod G_2$ are weighted by one.
In \cref{fig:vertex_ll,fig:freq_ll}, the signal gradually changes along both $G_1$ and $G_2$ in the vertex domain, and the spectrum generated by its 2-D GFT indicates that.
In \cref{fig:vertex_lh}, the signal gradually changes along $G_1$; however, for many $i_1\in V_1$, the signal at the center vertex $\seqprn{i_1,0}$ greatly differs from the signal at the surrounding vertices $\setprnsep{\seqprn{i_1,i_2}}{i_2\neq 0}$.
Therefore, the signal has low-frequency along $G_1$ and high-frequency along $G_2$.
The spectrum shown in \cref{fig:freq_lh} explains these anisotropic signal characteristics.
The spectra in \cref{fig:freq_hl,fig:freq_hh} also indicate the directional characteristics of the signals in \cref{fig:vertex_hl,fig:vertex_hh}, respectively.
Therefore, in these cases we can see $\uid{\lambda}{1}\in\app{\sigma}{\mat{L}_1}$ as a frequency along $G_1$ and $\uid{\lambda}{2}\in\app{\sigma}{\mat{L}_2}$ as a frequency along $G_2$.

\begin{figure*}
  \centering
  \begin{subfigure}[b]{0.245\linewidth}
    \centering
    \includegraphics[width=0.9\linewidth]{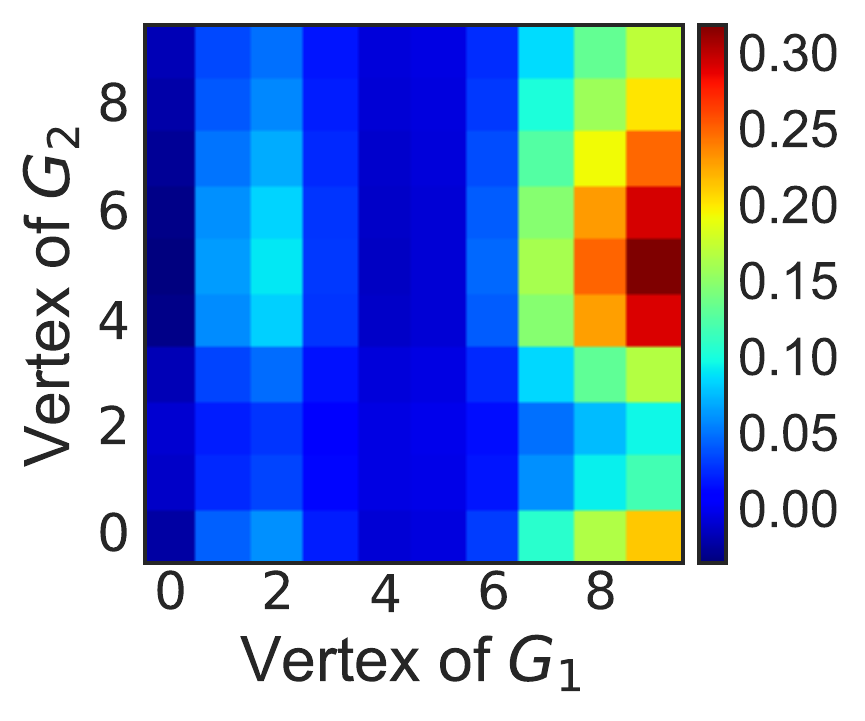}
    \hspace{-10pt}
    \caption{\vphantom{Vertex domain}}
    \label{fig:vertex_ll}
  \end{subfigure}
  \begin{subfigure}[b]{0.245\linewidth}
    \centering
    \includegraphics[width=0.9\linewidth]{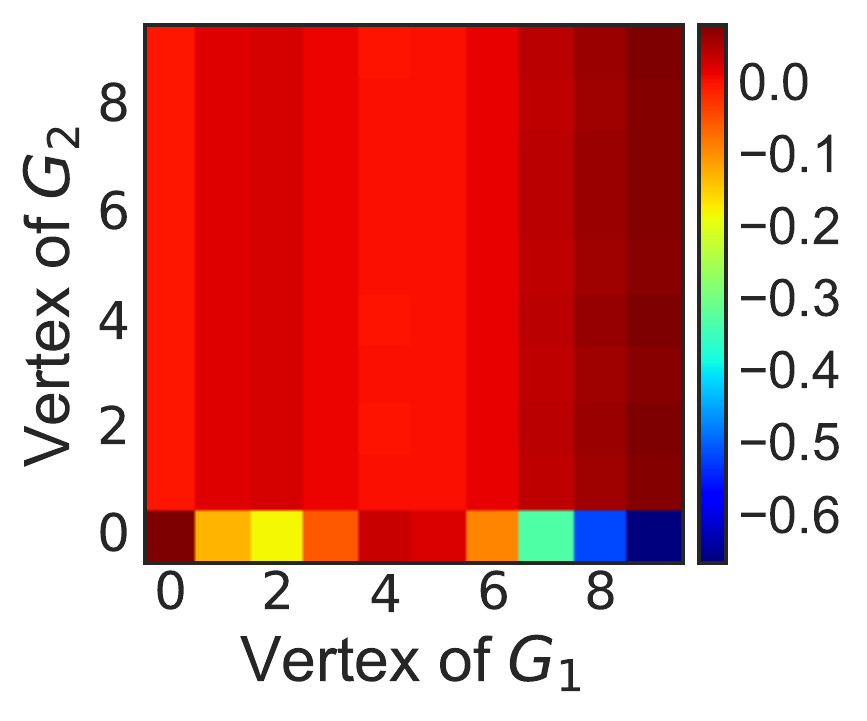}
    \hspace{-10pt}
    \caption{\vphantom{Vertex domain}}
    \label{fig:vertex_lh}
  \end{subfigure}
  \begin{subfigure}[b]{0.245\linewidth}
    \centering
    \includegraphics[width=0.9\linewidth]{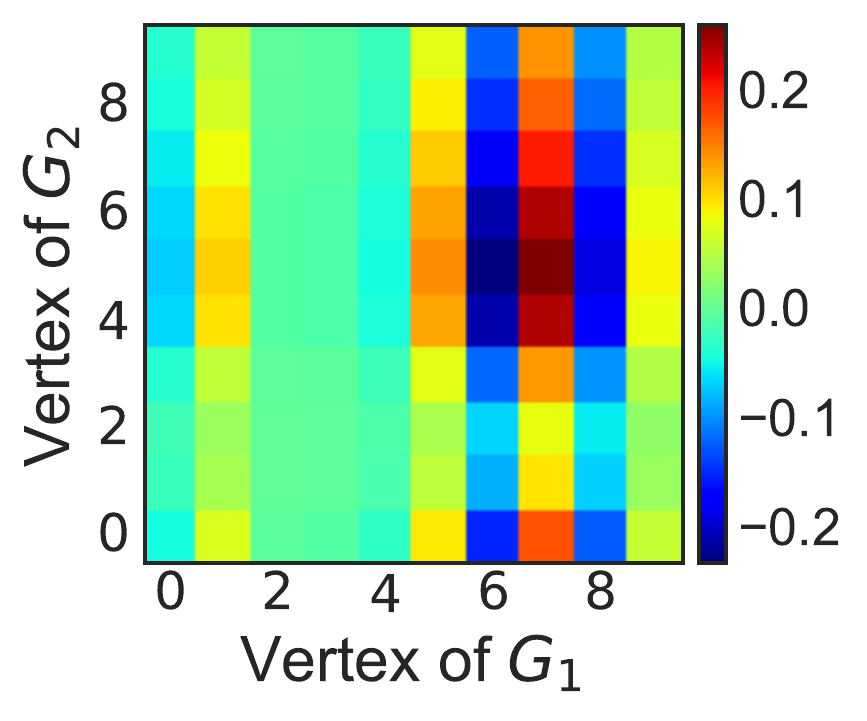}
    \hspace{-10pt}
    \caption{\vphantom{Vertex domain}}
    \label{fig:vertex_hl}
  \end{subfigure}
  \begin{subfigure}[b]{0.245\linewidth}
    \centering
    \includegraphics[width=0.9\linewidth]{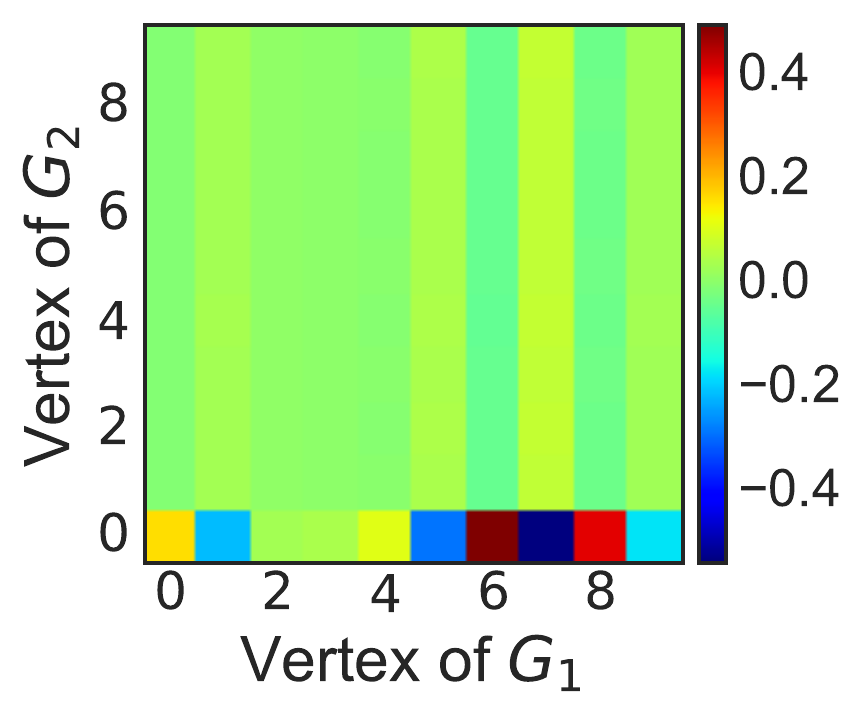}
    \hspace{-10pt}
    \caption{\vphantom{Vertex domain}}
    \label{fig:vertex_hh}
  \end{subfigure}
  \\[8pt]
  \begin{subfigure}[b]{0.245\linewidth}
    \centering
    \includegraphics[width=\linewidth]{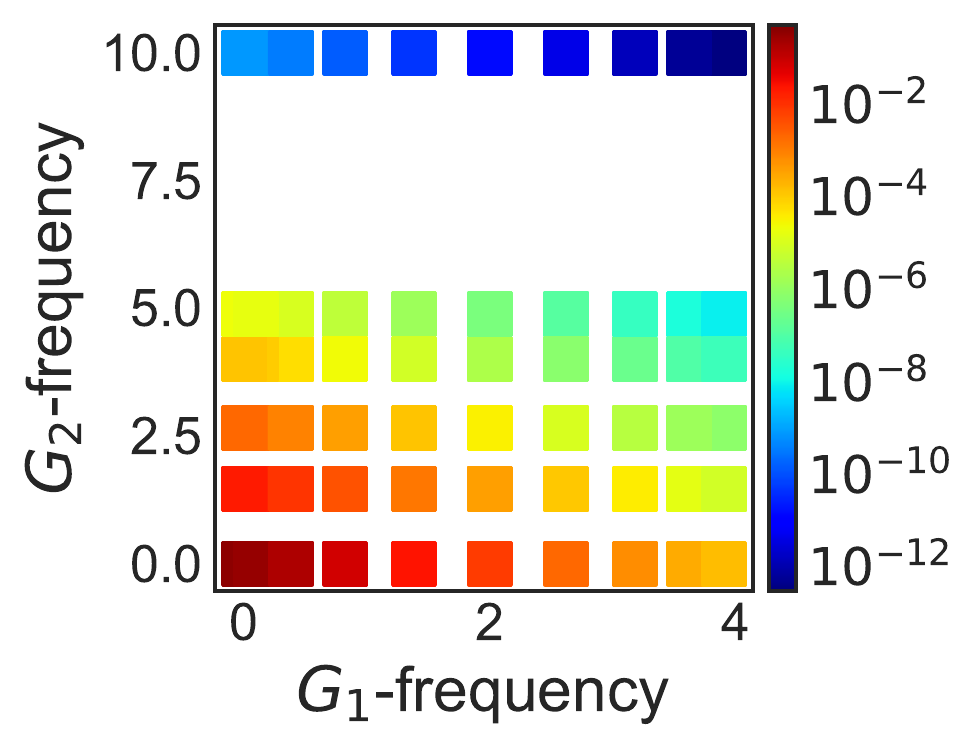}
    \hspace{-10pt}
    \caption{\vphantom{Frequency domain}}
    \label{fig:freq_ll}
  \end{subfigure}
  \begin{subfigure}[b]{0.245\linewidth}
    \centering
    \includegraphics[width=\linewidth]{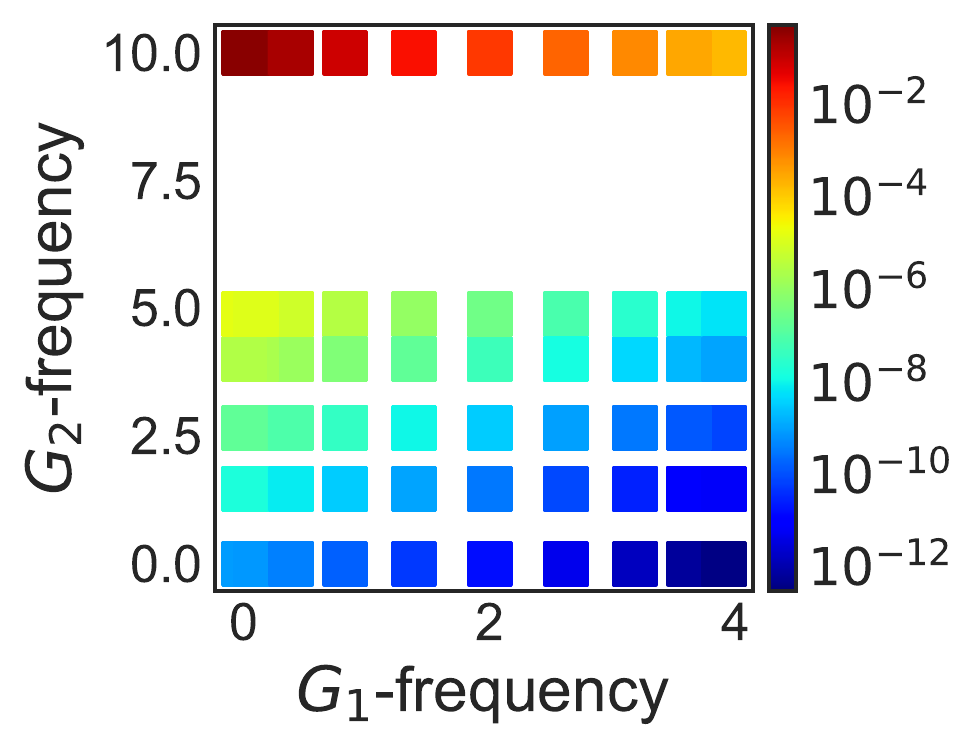}
    \hspace{-10pt}
    \caption{\vphantom{Frequency domain}}
    \label{fig:freq_lh}
  \end{subfigure}
  \begin{subfigure}[b]{0.245\linewidth}
    \centering
    \includegraphics[width=\linewidth]{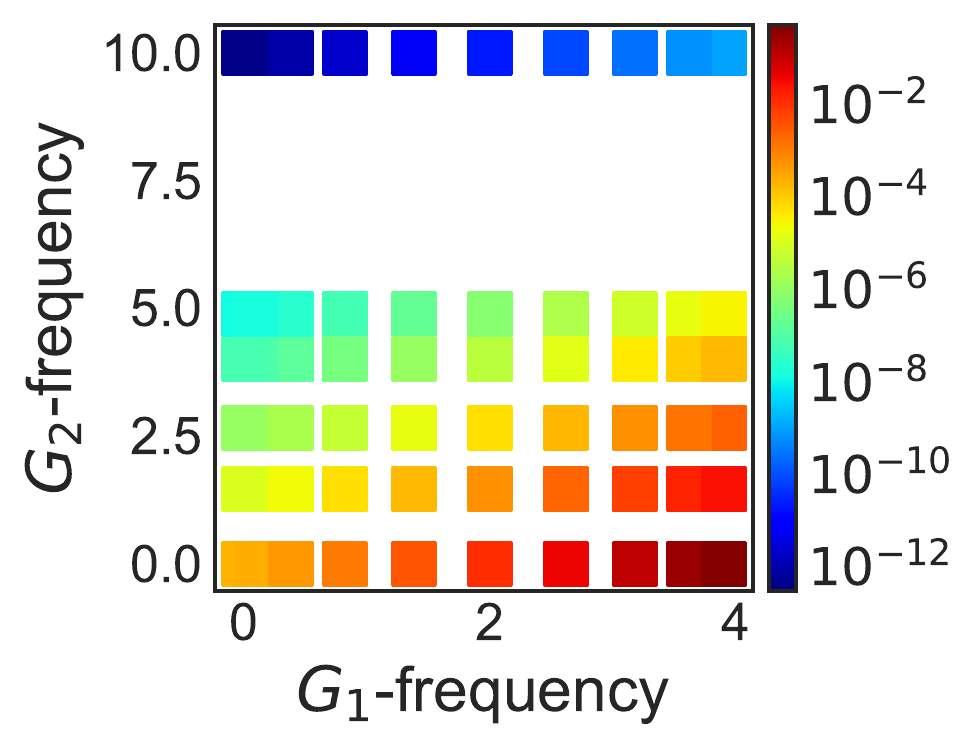}
    \hspace{-10pt}
    \caption{\vphantom{Frequency domain}}
    \label{fig:freq_hl}
  \end{subfigure}
  \begin{subfigure}[b]{0.245\linewidth}
    \centering
    \includegraphics[width=\linewidth]{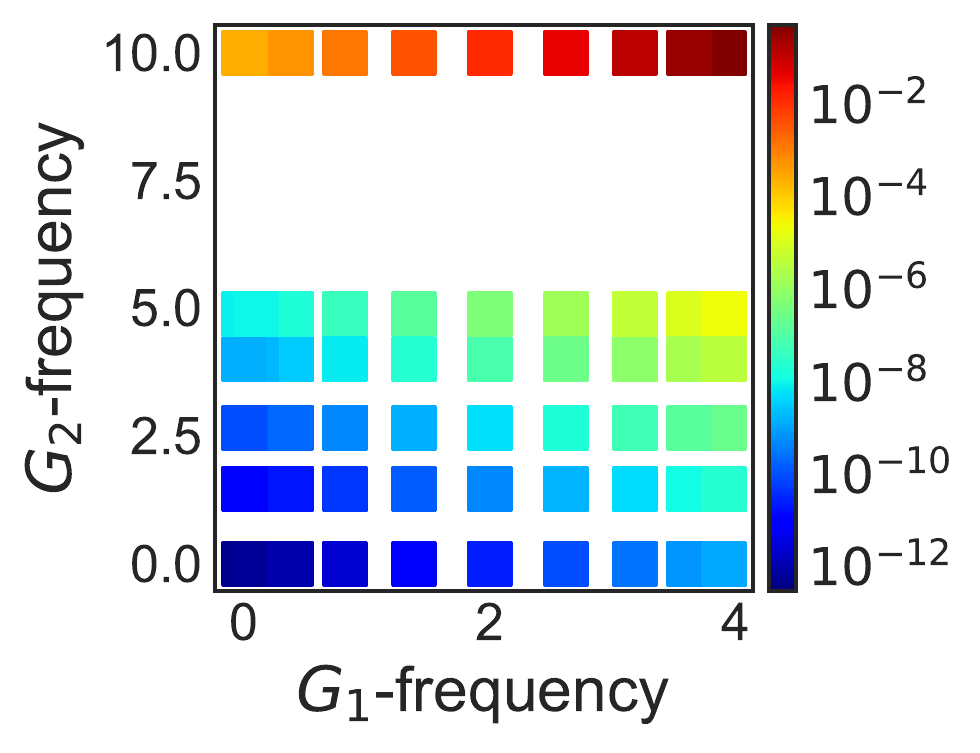}
    \hspace{-10pt}
    \caption{\vphantom{Frequency domain}}
    \label{fig:freq_hh}
  \end{subfigure}
  \caption{(a--d) Various signals on the product graph $G_1\cprod G_2$ shown in \cref{fig:pathxwheel}, and (e--h) their power spectra obtained by the 2-D GFT. \subref{fig:vertex_ll} illustrates low-$G_1$-frequency and low-$G_2$-frequency signal, \subref{fig:vertex_lh} illustrates low-$G_1$-frequency and high-$G_2$-frequency signal, \subref{fig:vertex_hl} illustrates high-$G_1$-frequency and low-$G_2$-frequency signal, and \subref{fig:vertex_hh} illustrates high-$G_1$-frequency and high-$G_2$-frequency signal. In each signal, the intensity at $\seqprn{i_1,i_2}\in V_1\times V_2$ is indicated by the color of the element $\seqprn{i_1,i_2}$. (e--h) illustrates the power spectra of (a--d) obtained by the proposed MGFT, respectively. In each power spectrum, the intensity at $\seqprn{\uidfootnote{\lambda}{1},\uidfootnote{\lambda}{2}}\in\app{\sigma}{\mat{L}_1}\times\app{\sigma}{\mat{L}_2}$ is indicated by color of the square at $\seqprn{\uidfootnote{\lambda}{1},\uidfootnote{\lambda}{2}}$.}
  \label{fig:various_signals}
\end{figure*}

\begin{figure}
  \centering
  \begin{subfigure}[b]{0.12\linewidth}
    \centering
    \includegraphics[width=\linewidth]{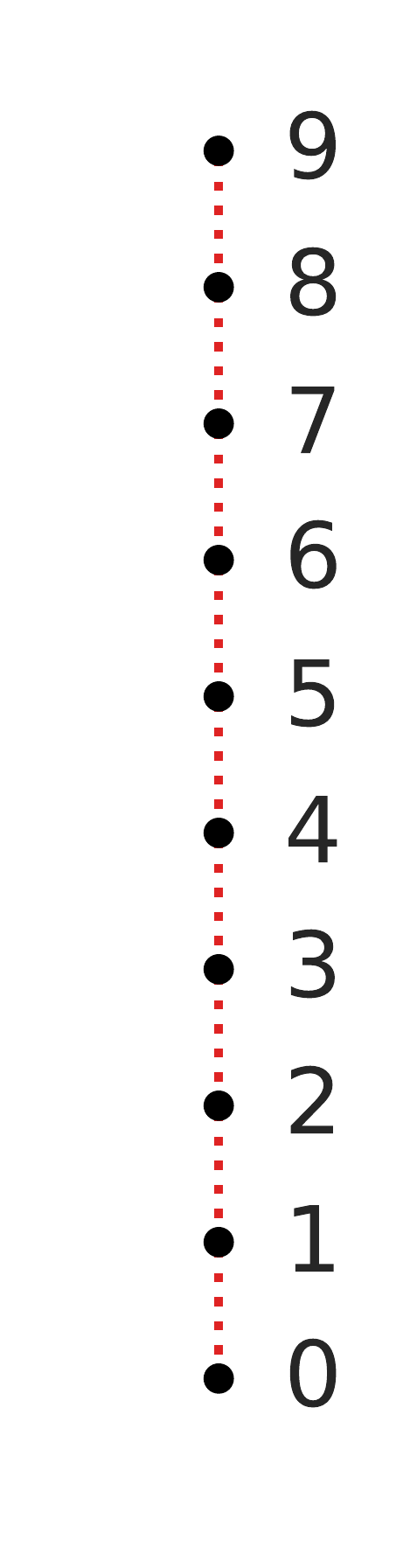}
    \caption{\vphantom{$G_1$}}
    \label{fig:path}
  \end{subfigure}
  \begin{subfigure}[b]{0.42\linewidth}
    \centering
    \includegraphics[width=\linewidth]{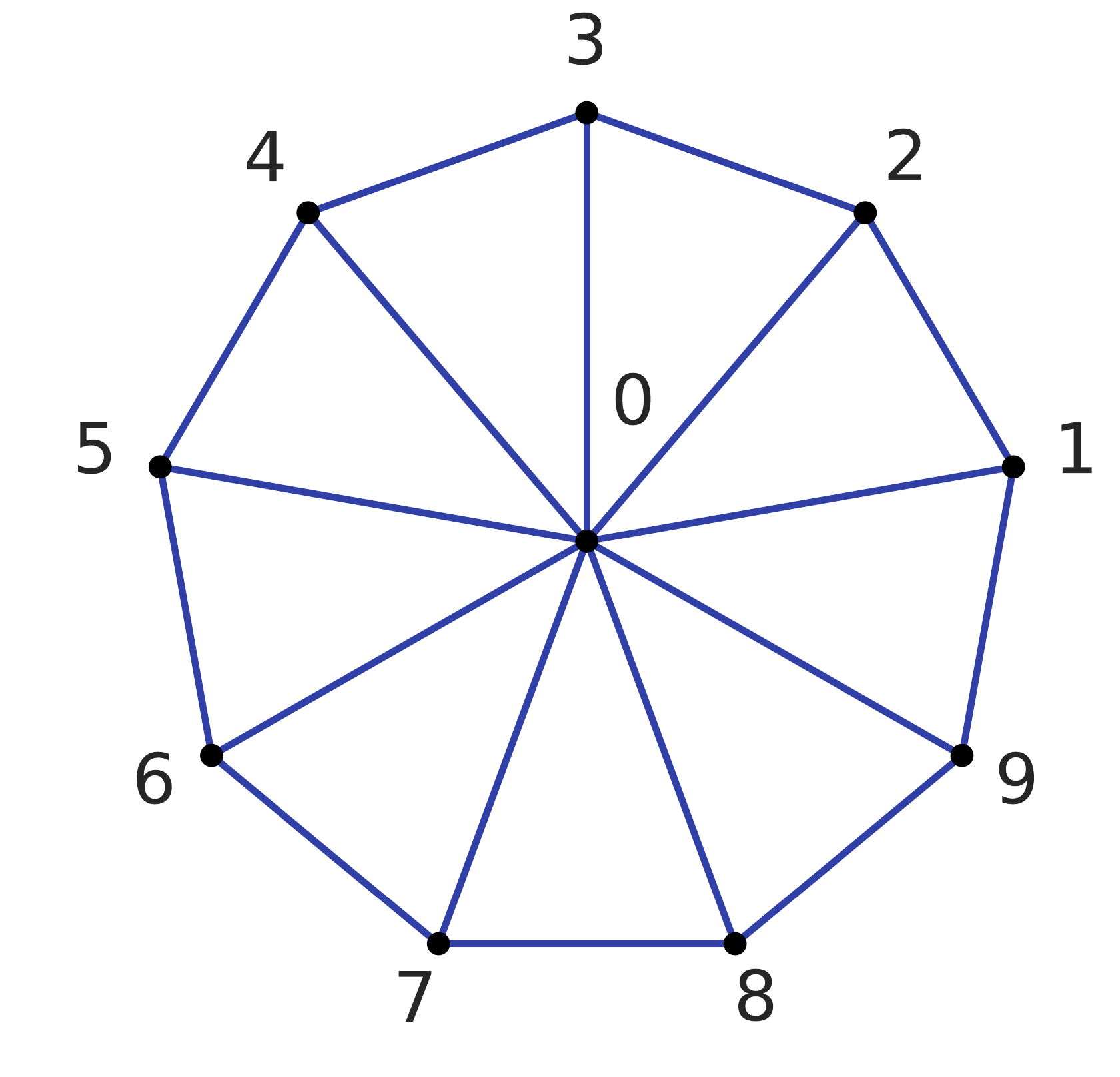}
    \caption{\vphantom{$G_2$}}
    \label{fig:wheel}
  \end{subfigure}
  \begin{subfigure}[t]{0.42\linewidth}
    \centering
    \includegraphics[width=\linewidth]{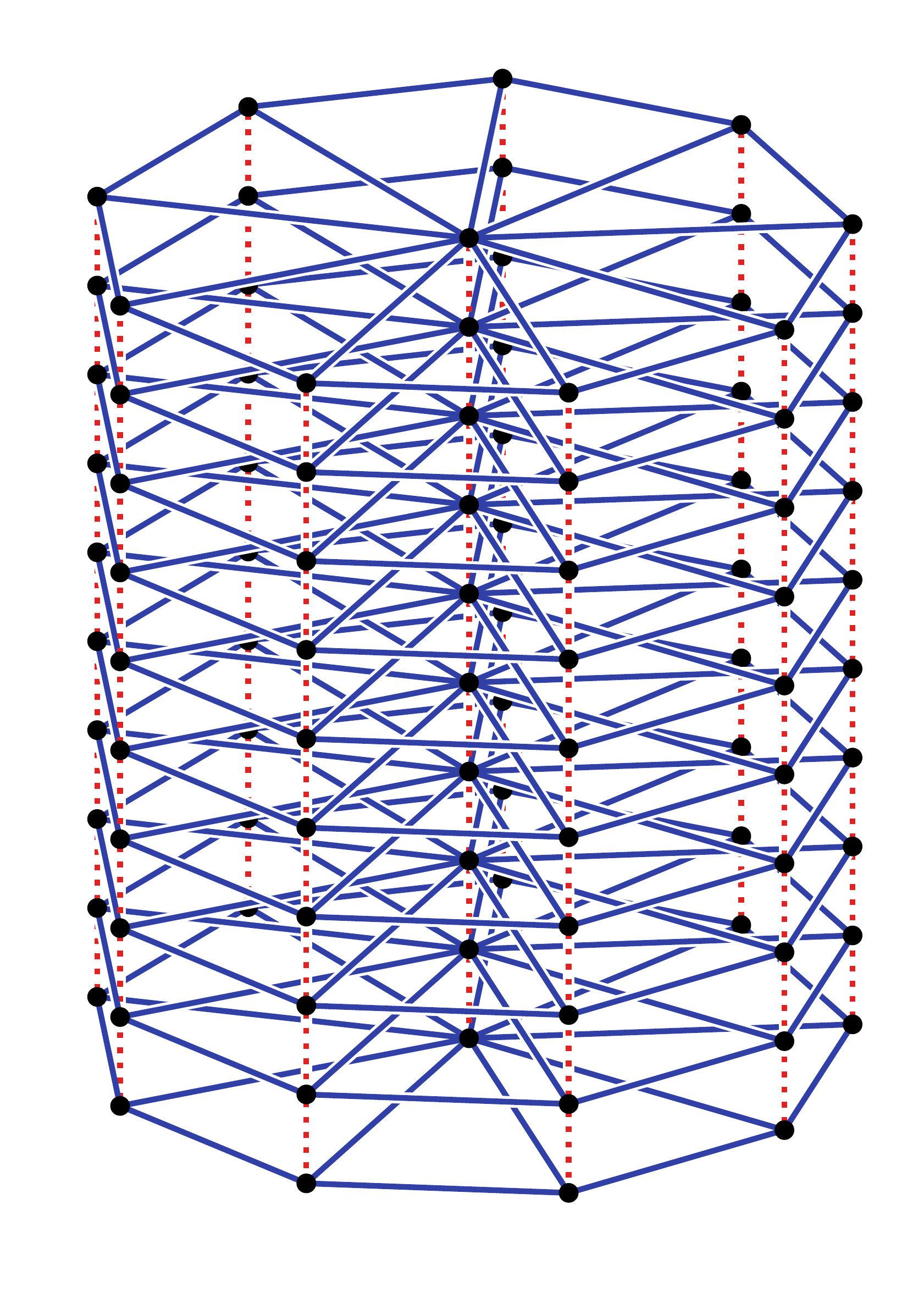}
    \caption{\vphantom{$G_1\cprod G_2$}}
    \label{fig:pathxwheel}
  \end{subfigure}
  \caption{\subref{fig:path} Path graph, \subref{fig:wheel} wheel graph, and \subref{fig:pathxwheel} their Cartesian product graph. Red dotted edges in \subref{fig:pathxwheel} come from \subref{fig:path} and blue solid edges come from \subref{fig:wheel}.}
  \label{fig:frequency_cpgraph}
\end{figure}

Theoretically, the total directional variation study below indicates that variables of the 2-D GFT spectra behave as directional frequencies along factor graphs.
The discussion here is based on the total variation study in~\citep{Shuman2013}.
For a graph signal $\funcdoms{f}{V}{\setR}$, the graph gradient of $f$ at $i\in V$ is the signal $\funcdoms{\nabla_i f}{V}{\setR}$ defined by
\begin{align}
  \app{\nabla_i f}{j}=\sqrt{\app{w}{i,j}}\paren{\app{f}{j}-\app{f}{i}}.
\end{align}
For a 2-D graph signal $\funcdoms{f}{V_1\times V_2}{\setR}$, the local $G_1$-directional variation of $f$ at $\seqprn{i_1,i_2}\in V_1\times V_2$ is a Euclidean norm of $G_1$-directional components of the graph gradient $\nabla_{i_1,i_2}f$ given by
\begin{align}
  \app{\uid{\mathcal{V}_{i_1,i_2}}{1}}{f}
  &=\sqrt{\sum_{j_1=0}^{N_1-1}\absprn{\app{\nabla_{i_1,i_2}f}{j_1,i_2}}^2}
  \\
  &=\sqrt{\sum_{j_1=0}^{N_1-1}\app{w_1}{i_1,j_1}\paren{\app{f}{j_1,i_2}-\app{f}{i_1,i_2}}^2},
\end{align}
which indicates how intensely signal $f$ changes along $G_1$ at the vertex $\seqprn{i_1,i_2}$;
and the total $G_1$-directional variation of $f$ is a 2-Dirichlet form of the local $G_1$-directional variations given by
\begin{align}
  \app{\uid{S_2}{1}}{f}
  &=\frac{1}{2}\sum_{i_1=0}^{N_1-1}\sum_{i_2=0}^{N_2-1}\absprn{\app{\uid{\mathcal{V}_{i_1,i_2}}{1}}{f}}^2
  \\
  &=\frac{1}{2}\sum_{i_1=0}^{N_1-1}\sum_{j_1=0}^{N_1-1}\app{w_1}{i_1,j_1}\norm[2]{\app{f}{j_1,\placeholder}-\app{f}{i_1,\placeholder}}^2,
\end{align}
which indicates how intensely the signal $f$ changes along $G_1$.
Using the matrix representation of the 2-D GFT in \cref{eq:2DGFT_mat}, the total $G_1$-directional variation can be decomposed as
\begin{align}
  \app{\uid{S_2}{1}}{f}
  &=\sum_{i_1=0}^{N_1-1}\sum_{j_1=0}^{N_1-1}\app{w_1}{i_1,j_1}\norm[2]{\app{f}{i_1,\placeholder}}^2
  \\
  &\hphantom{=}-\sum_{i_1=0}^{N_1-1}\sum_{j_1=0}^{N_1-1}\app{w_1}{i_1,j_1}\iprod{\app{f}{i_1,\placeholder}}{\app{f}{j_1,\placeholder}}
  \\
  &=\app{\tr}{\vtrsps{\mat{F}}\mat{D}_1\mat{F}}
  -\app{\tr}{\vtrsps{\mat{F}}\mat{W}_1\mat{F}}
  \\
  &=\app{\tr}{\vtrsps{\mat{F}}\mat{L}_1\mat{F}}
  =\sum_{k_1=0}^{N_1-1}\uid{\lambda_{k_1}}{1}\norm[2]{\app{\vhat{f}}{\vuid{\lambda_{k_1}}{1},\placeholder}}^2,
\end{align}
where $\iprod{\app{f}{i_1,\placeholder}}{\app{f}{j_1,\placeholder}}$ is $\sum_{i_2=0}^{N_2-1}\app{f}{i_1,i_2}\app{f}{j_1,i_2}$.
This decomposition shows that the higher-$G_1$-frequency components of signals contribute more to their variation along $G_1$.
The total quadratic $G_2$-directional variation $\app{\uid{S_2}{2}}{f}$ is decomposed as $\app{\uid{S_2}{2}}{f}=\sum_{k_2=0}^{N_2-1}\uid{\lambda_{k_2}}{2}\norm[2]{\app{\vhat{f}}{\placeholder,\vuid{\lambda_{k_2}}{2}}}^2$ in the same way, thus the higher-$G_2$-frequency components of signals contribute more to their variation along $G_2$.

A total directional variation also appears in other situations.
Laplacian eigenmaps, a popular tool used in manifold learning, find the low-dimensional data representation that minimizes the total variation along a similarity graph of data~\citep{Belkin2001}.
The minimizer provides the smoothest representation on that graph.
Matrix completion on graphs infers the original matrix from its incomplete observation when given a column-wise-similarity graph $G_C$ and a row-wise-similarity graph $G_R$~\citep{Kalofolias2014}.
The method minimizes the sum of four terms: the distance between the inference and the observation, nuclear norm of the inference, total $G_C$-directional variation, and total $G_R$-directional variation.
Its minimizer is close to the observation, is low-rank, and is smooth along $G_C$ and $G_R$.

The second advantage of the 2-D GFT is that it can solve the multi-valuedness of the ordinary GFT.
When a graph Laplacian has non-distinct eigenvalues, the ordinary GFT spectra of graph signals are multi-valued at multiple eigenvalues.
Therefore, for a signal on a product graph $G_1\cprod G_2$, if frequencies $\uid{\lambda_{k_1}}{1}+\uid{\lambda_{k_2}}{2}$ and $\uid{\lambda_{l_1}}{1}+\uid{\lambda_{l_2}}{2}$ are equal with $k_1\neq l_1$ or $k_2\neq l_2$, the GFT assigns two different values to the signal spectrum at the frequency.
However, even if $\uid{\lambda_{k_1}}{1}+\uid{\lambda_{k_2}}{2}$ and $\uid{\lambda_{l_1}}{1}+\uid{\lambda_{l_2}}{2}$ are equal, pairs of frequencies $\seqprn{\vuid{\lambda_{k_1}}{1},\vuid{\lambda_{k_2}}{2}}$ and $\seqprn{\vuid{\lambda_{l_1}}{1},\vuid{\lambda_{l_2}}{2}}$ may be different, and then the 2-D spectrum by the 2-D GFT is well-defined at the frequency pairs.
A product graph $G\cprod G$ whose factor graph $G$ has distinct eigenvalues serves as a typical example.
For any $k\neq l$, its 1-D spectrum is double-valued at frequency $\lambda_k+\lambda_l=\lambda_l+\lambda_k$ but 2-D spectrum is well-defined at $\seqprn{\lambda_k,\lambda_l}$ and $\seqprn{\lambda_l,\lambda_k}$ separately.

The third advantage is that the 2-D GFT takes less computational time than the ordinary GFT.
The 2-D GFT and its inverse on a product graph $G_1\cprod G_2$ cost $\app{O}{N_1^2N_2+N_1N_2^2}$ time with the straightforward matrix multiplication, although the conventional GFT and its inverse on the same graph cost $\app{O}{N_1^2N_2^2}$ time.
Preliminarily, the 2-D GFT and its inverse need an eigendecomposition of the graph Laplacians of factor graphs $G_1$ and $G_2$ that costs $\app{O}{N_1^3+N_2^3}$ time, whereas the conventional GFT and its inverse need an eigendecomposition of a graph Laplacian of $G$ that costs $\app{O}{N_1^3N_2^3}$ time.
Furthermore, we can theoretically reduce the temporal cost of the 2-D transform by utilizing fast matrix multiplication algorithms, such as the Strassen algorithm, because these operations consist of a chain of matrix-matrix multiplications.

For an arbitrary natural number $n$, an $n$-dimensional GFT on a Cartesian product graph
\begin{equation}
  G_1\cprod\cdots\cprod G_n=\paren{\paren{\cdots\paren{G_1\cprod G_2}\cdots}\cprod G_n}
\end{equation}
is inductively defined by 2-D GFTs.

\subsection{Adjacency-based multi-dimensional graph Fourier transform}\label{subsec:asp_mgft}
We can extend the adjacency-based GFT to its 2-D version in the same manner as the Laplacian-based GFT, because the adjacency matrix of a Cartesian product graph is also a Kronecker sum of those of factor graphs, such as the Laplacian matrix.

Consider the adjacency-based GFT on a Cartesian product graph first.
For $n=1,2$, let $G_n$ be a weighted graph with vertex set $V=\setprn{0,\ldots,N_n-1}$, and suppose that its adjacency matrix $\mat{W}_n$ has eigenvalues $\rawcurlybrace{\uid{\mu_k}{n}}_{k=0,\ldots,N_n-1}$ and the corresponding eigenfunctions $\rawcurlybrace{\uid{v_k}{n}}_{k=0,\ldots,N_n-1}$ on $V_n$.
From the discussion about product graphs in \cref{subsec:cpgraph}, an adjacency-based GFT of a signal $f$ on $G_1\cprod G_2$ is a spectrum $\vhat{f}$ on $\app{\sigma}{\mat{W}_1\ksum\mat{W}_2}$ satisfying
\begin{equation}\label{eq:adj_GFT_on_cp_graph}
  \app{f}{i_1,i_2}
  =\sum_{k_1=0}^{N_1-1}\sum_{k_2=0}^{N_2-1}\app{\vhat{f}}{\uid{\mu_{k_1}}{1}+\uid{\mu_{k_2}}{2}}\app{\uid{v_{k_1}}{1}}{i_1}\app{\uid{v_{k_2}}{2}}{i_2}
\end{equation}
for $i_1=0,\ldots,N_1-1$ and $i_2=0,\ldots,N_2-1$.
Based on that, we define an adjacency-based 2-D GFT of a signal $f$ on $G_1\cprod G_2$ is a spectrum $\vhat{f}$ on $\app{\sigma}{\mat{W}_1}\times\app{\sigma}{\mat{W}_2}$ satisfying
\begin{equation}
  \app{f}{i_1,i_2}
  =\sum_{k_1=0}^{N_1-1}\sum_{k_2=0}^{N_2-1}\app{\vhat{f}}{\uid{\mu_{k_1}}{1},\uid{\mu_{k_2}}{2}}\app{\uid{v_{k_1}}{1}}{i_1}\app{\uid{v_{k_2}}{2}}{i_2}
\end{equation}
for $i_1=0,\ldots,N_1-1$ and $i_2=0,\ldots,N_2-1$, and then call an eigenvalue of $\mat{W}_1$ a frequency along $G_1$-direction and that of $\mat{W}_2$ a frequency along $G_2$-direction.

As with the Laplacian-based 2-D GFT, the adjacency-based 2-D GFT and its inverse cost $\app{O}{N_1^2N_2+N_1N_2^2}$ time in transforms itself and $\app{O}{N_1^3+N_2^3}$ time in the preliminary eigendecomposition.

\section{Multi-dimensional graph signal filtering}\label{sec:filtering}

\subsection{Graph spectral filtering}\label{subsec:spectral_filtering}
In time signal processing, \newword{a spectral filtering} is a multiplication in frequency domain.
Filtering a temporal signal with a spectrum $\vhat{f}_\IN$ by a filter with frequency response $\vhat{h}$, we obtain a signal with a spectrum \begin{equation}
  \app{\vhat{f}_\OUT}{\omega}=\app{\vhat{h}}{\omega}\app{\vhat{f}_\IN}{\omega}.
\end{equation}

This filtering framework is easily extended to GSP.
Let $G$ be an undirected weighted graph with vertex set $V=\setprn{0,\ldots,N-1}$ whose graph Laplacian has eigenvalues $\rawcurlybrace{\lambda_k}_{k=0,\ldots,N-1}$.
A filter with \newword{spectral kernel} $\funcdoms{\vhat{h}}{\setRnonneg}{\setC}$, applied to a signal $\funcdoms{f_\IN}{V}{\setR}$ on the graph $G$, gives an output signal $\funcdoms{f_\OUT}{V}{\setR}$ on the graph defined by
\begin{equation}
  \app{\vhat{f}_\OUT}{\lambda_k}=\app{\vhat{h}}{\lambda_k}\app{\vhat{f}_\IN}{\lambda_k}\label{eq:spectral_filtering}
\end{equation}
where $\vhat{f}_\IN$ and $\vhat{f}_\OUT$ are spectra of $f_\IN$ and $f_\OUT$ obtained by the GFT, respectively (see~\citep{Shuman2013}).
Many graph spectral filter designs are considered: a polynomial kernel filter~\citep{Shuman2013,Shuman2016}, a heat kernel filter~\citep{Shuman2013,Zhang2008,Shuman2016}, and a graph bilateral filter~\citep{Gadde2013}.

We propose \newword{a 2-D graph spectral filter} that multiplies the 2-D GFT spectra by a 2-D spectral kernel, whereas an existing graph spectral filter multiplies the conventional GFT spectra by a 1-D spectral kernel.
For $n=1,2$, let $G_n$ be an undirected weighted graph with vertex set $V_n=\setprn{0,\ldots,N_n-1}$ whose graph Laplacian matrix $\mat{L}_n$ has ascending eigenvalues $\rawcurlybrace{\uid{\lambda_k}{n}}_{k=0,\ldots,N_n-1}$ and the corresponding orthonormal eigenfunctions $\rawcurlybrace{\uid{u_k}{n}}_{k=0,\ldots,N_n-1}$ on $V_n$.

\begin{definition}[Two-dimensional graph spectral filtering]
  A two-dimensional graph spectral filtering of a graph signal $\funcdoms{f_\IN}{V_1\times V_2}{\setR}$ on a Cartesian product graph $G_1\cprod G_2$ with spectral kernel $\funcdoms{\vhat{h}}{\setRnonneg\times\setRnonneg}{\setC}$ gives a graph signal $\funcdoms{f_\OUT}{V_1\times V_2}{\setC}$ with 2-D spectrum
  \begin{equation}
    \app{\vhat{f}_\OUT}{\uid{\lambda_{k_1}}{1},\uid{\lambda_{k_2}}{2}}
    =\app{\vhat{h}}{\uid{\lambda_{k_1}}{1},\uid{\lambda_{k_2}}{2}}\app{\vhat{f}_\IN}{\uid{\lambda_{k_1}}{1},\uid{\lambda_{k_2}}{2}}\label{eq:2Dspec_filtering}
  \end{equation}
  where $\vhat{f}_\IN$ is a 2-D spectrum of $f_\IN$ obtained by the 2-D GFT.
\end{definition}

In our 2-D graph spectral filtering framework, we can design directional frequency responses (unlike in 1-D graph spectral filtering).
When applied to the signal $f_\IN$, a 1-D graph spectral filter gives a signal $f_\OUT$ that has a spectrum
\begin{equation}
  \app{\vhat{f}_\OUT}{\uid{\lambda_{k_1}}{1},\uid{\lambda_{k_2}}{2}}
  =\app{\vhat{h}}{\uid{\lambda_{k_1}}{1}+\uid{\lambda_{k_2}}{2}}\app{\vhat{f}_\IN}{\uid{\lambda_{k_1}}{1},\uid{\lambda_{k_2}}{2}}
\end{equation}
where $\funcdoms{\vhat{h}}{\setRnonneg}{\setC}$ is the 1-D spectral kernel.
When applied to the same signal, two factor-graph-wise 1-D graph spectral filters give a signal $f_\OUT$ that has a spectrum
\begin{equation}
  \app{\vhat{f}_\OUT}{\uid{\lambda_{k_1}}{1},\uid{\lambda_{k_2}}{2}}
  =\app{\vhat{h}_1}{\uid{\lambda_{k_1}}{1}}\app{\vhat{h}_2}{\uid{\lambda_{k_2}}{2}}\app{\vhat{f}_\IN}{\uid{\lambda_{k_1}}{1},\uid{\lambda_{k_2}}{2}}
\end{equation}
where $\funcdoms{\vhat{h}_1}{\setRnonneg}{\setC}$ and $\funcdoms{\vhat{h}_2}{\setRnonneg}{\setC}$ are the 1-D spectral kernels.
These two filtering frameworks are less expressive than a 2-D graph spectral filtering framework.

If a 2-D graph spectral filter has a polynomial kernel, it has certain locality in the vertex domain.
Note that in 1-D graph polynomial filtering, \citet{Hammond2011} pointed out the following locality: with an $S$-degree polynomial filter, the output signal at some vertex is a linear combination of the input signal in its $S$-hop neighborhood, i.e., the set of vertices reachable through no more than $S$ edges.
Let $\funcdoms{\vhat{h}_{S_1S_2}}{\setRnonneg\times\setRnonneg}{\setC}$ be a 2-D polynomial kernel given by
\begin{equation}\label{eq:2Dpolykernel}
  \app{\vhat{h}_{S_1S_2}}{\uid{\lambda}{1},\uid{\lambda}{2}}
  =\sum_{s_1=0}^{S_1}\sum_{s_2=0}^{S_2}h_{s_1s_2}\paren{\uid{\lambda}{1}}^{s_1}\paren{\uid{\lambda}{2}}^{s_2}
\end{equation}
with $h_{00},\ldots,h_{S_1S_2}\in\setC$.
By using $N_1\times N_2$ matrices $\vhat{\mat{F}}_\IN=\rawparen{\app{\vhat{f}_\IN}{\vuid{\lambda_{k_1}}{1},\vuid{\lambda_{k_2}}{2}}}_{k_1,k_2}$ and $\vhat{\mat{F}}_\OUT=\rawparen{\app{\vhat{f}_\OUT}{\vuid{\lambda_{k_1}}{1},\vuid{\lambda_{k_2}}{2}}}_{k_1,k_2}$, the 2-D spectral filtering with the kernel $\vhat{h}_{S_1S_2}$ is represented as
\begin{equation}\label{eq:2Dfilter_freq}
  \vhat{\mat{F}}_\OUT
  =\sum_{s_1=0}^{S_1}\sum_{s_2=0}^{S_2}h_{s_1s_2}\mat{\Lambda}_1^{s_1}\vhat{\mat{F}}_\IN\mat{\Lambda}_2^{s_2}
\end{equation}
in frequency domain, where $\mat{\Lambda}_n$ is a diagonal matrix with diagonal elements $\uid{\lambda_0}{n},\ldots,\uid{\lambda_{N_n-1}}{n}$ for $n=1,2$.

\begin{proposition}[Slight generalization of {\citep[Equation~(12)]{Loukas2016}}]\label{prop:2Dfilter_vertex}
  By using $N_1\times N_2$ matrices $\mat{F}_\IN=\rawparen{\app{f_\IN}{i_1,i_2}}_{i_1,i_2}$ and $\mat{F}_\OUT=\rawparen{\app{f_\OUT}{i_1,i_2}}_{i_1,i_2}$, the filtering of \cref{eq:2Dfilter_freq} can be represented as
  \begin{equation}
    \mat{F}_\OUT
    =\sum_{s_1=0}^{S_1}\sum_{s_2=0}^{S_2}h_{s_1s_2}\mat{L}_1^{s_1}\mat{F}_\IN\mat{L}_2^{s_2}
  \end{equation}
  in vertex domain.
\end{proposition}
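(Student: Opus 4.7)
The plan is to translate the frequency-domain relation back to the vertex domain by invoking the matrix form of the inverse 2-D GFT in \cref{eq:i2DGFT_mat} and then collapsing the unitary factors into the Laplacians via their eigendecompositions. Concretely, I would first note that since each $\mat{L}_n$ is real symmetric with eigendecomposition $\mat{L}_n = \mat{U}_n \mat{\Lambda}_n \adjoint{\mat{U}_n}$, functional calculus gives $\mat{L}_n^{s_n} = \mat{U}_n \mat{\Lambda}_n^{s_n} \adjoint{\mat{U}_n}$ for every nonnegative integer $s_n$.

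Next, I would substitute the frequency-domain filter equation \cref{eq:2Dfilter_freq} into the inversion formula $\mat{F}_\OUT = \mat{U}_1 \vhat{\mat{F}}_\OUT \trsps{\mat{U}_2}$, yielding
\begin{equation}
\mat{F}_\OUT = \sum_{s_1=0}^{S_1}\sum_{s_2=0}^{S_2} h_{s_1 s_2}\, \mat{U}_1 \mat{\Lambda}_1^{s_1} \vhat{\mat{F}}_\IN \mat{\Lambda}_2^{s_2} \trsps{\mat{U}_2}.
\end{equation}
Then I would replace $\vhat{\mat{F}}_\IN$ using the forward transform $\vhat{\mat{F}}_\IN = \adjoint{\mat{U}_1} \mat{F}_\IN \wideconj{\mat{U}}_2$ from \cref{eq:2DGFT_mat}, grouping the factors on each side of $\mat{F}_\IN$. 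On the left the factor $\mat{U}_1 \mat{\Lambda}_1^{s_1} \adjoint{\mat{U}_1}$ is exactly $\mat{L}_1^{s_1}$. On the right the factor is $\wideconj{\mat{U}}_2 \mat{\Lambda}_2^{s_2} \trsps{\mat{U}_2}$, which I would rewrite by taking the complex conjugate of $\mat{L}_2^{s_2} = \mat{U}_2 \mat{\Lambda}_2^{s_2} \adjoint{\mat{U}_2}$: since $\mat{L}_2$ and $\mat{\Lambda}_2$ are real, $\mat{L}_2^{s_2} = \wideconj{\mat{L}_2^{s_2}} = \wideconj{\mat{U}}_2 \mat{\Lambda}_2^{s_2} \trsps{\mat{U}}_2$, giving the desired identity.

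Putting these two identifications together produces $\mat{F}_\OUT = \sum_{s_1,s_2} h_{s_1 s_2} \mat{L}_1^{s_1} \mat{F}_\IN \mat{L}_2^{s_2}$, as claimed. The only subtle step is the bookkeeping with conjugates on the $G_2$ side, where one must be careful that $\trsps{\mat{U}_2}$ appearing in the inverse transform combines correctly with $\wideconj{\mat{U}}_2$ from the forward transform; this is handled cleanly by appealing to the reality of $\mat{L}_2$. Everything else is routine linear algebra, so I do not expect any serious obstacle.
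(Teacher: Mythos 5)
Your proposal is correct and follows essentially the same route as the paper's proof: both start from the inverse 2-D GFT in matrix form, reduce to the identifications $\mat{U}_1\mat{\Lambda}_1^{s_1}\adjoint{\mat{U}_1}=\mat{L}_1^{s_1}$ and $\wideconj{\mat{U}}_2\mat{\Lambda}_2^{s_2}\trsps{\mat{U}_2}=\mat{L}_2^{s_2}$, and handle the $G_2$-side conjugates via the reality of $\mat{L}_2$. The only cosmetic difference is that you substitute the forward transform $\vhat{\mat{F}}_\IN=\adjoint{\mat{U}_1}\mat{F}_\IN\wideconj{\mat{U}}_2$ while the paper inserts identities $\adjoint{\mat{U}_1}\mat{U}_1$ and $\trsps{\mat{U}_2}\wideconj{\mat{U}}_2$ to expose the factor $\mat{U}_1\vhat{\mat{F}}_\IN\trsps{\mat{U}_2}=\mat{F}_\IN$; these are the same computation.
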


\begin{proof}
  Due to the matrix representation of the inverse 2-D GFT in \cref{eq:i2DGFT_mat}, we have
  \begin{align}
    \mat{F}_\OUT
    &=\mat{U}_1\paren{\sum_{s_1=0}^{S_1}\sum_{s_2=0}^{S_2}h_{s_1s_2}\mat{\Lambda}_1^{s_1}\vhat{\mat{F}}_\IN\mat{\Lambda}_2^{s_2}}\vtrsps{\mat{U}_2}
    \\
    &=\sum_{s_1=0}^{S_1}\sum_{s_2=0}^{S_2}h_{s_1s_2}\paren{\mat{U}_1\mat{\Lambda}_1^{s_1}\vadjoint{\mat{U}_1}}\paren{\mat{U}_1\vhat{\mat{F}}_\IN\vtrsps{\mat{U}_2}}\paren{\wideconj{\mat{U}}_2\mat{\Lambda}_2^{s_2}\vtrsps{\mat{U}_2}}
    \\
    &=\sum_{s_1=0}^{S_1}\sum_{s_2=0}^{S_2}h_{s_1s_2}\mat{L}_1^{s_1}\mat{F}_\IN\mat{L}_2^{s_2}.
  \end{align}
\end{proof}

The locality of a 2-D polynomial kernel filter is easily deduced from \cref{prop:2Dfilter_vertex}.
Given the 2-D polynomial kernel in \cref{eq:2Dpolykernel}, define a neighborhood $\app{\mathcal{N}}{i_1,i_2}$ of $\seqprn{i_1,i_2}\in V_1\times V_2$ as follows: a vertex $\seqprn{j_1,j_2}\in V_1\times V_2$ belongs to $\app{\mathcal{N}}{i_1,i_2}$ if and only if it is reachable from $\seqprn{i_1,i_2}$ through $t_1$ edges along $G_1$ and $t_2$ edges along $G_2$ and some nonzero $h_{s_1s_2}$ exists satisfying $t_1\leq s_1$ and $t_2\leq s_2$.
A 2-D graph signal filter with the kernel $\vhat{h}_{S_1S_2}$ is local with respect to this neighborhood.

\begin{corollary}\label{cor:2Dfilter_locality}
  In the filtering of \cref{eq:2Dfilter_freq}, for any vertex $\seqprn{i_1,i_2}\in V_1\times V_2$, an output value $\app{f_\OUT}{i_1,i_2}$ is a linear combination of input values $\setprnsep{\app{f_\IN}{j_1,j_2}}{\seqprn{j_1,j_2}\in\app{\mathcal{N}}{i_1,i_2}}$.
\end{corollary}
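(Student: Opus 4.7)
The plan is to derive the corollary directly from the vertex-domain matrix representation in \cref{prop:2Dfilter_vertex}. Expanding that identity entry-wise, I would write
\begin{equation}
  \app{f_\OUT}{i_1,i_2} = \sum_{s_1=0}^{S_1}\sum_{s_2=0}^{S_2}\sum_{j_1=0}^{N_1-1}\sum_{j_2=0}^{N_2-1} h_{s_1s_2}\paren{\mat{L}_1^{s_1}}_{i_1 j_1}\paren{\mat{L}_2^{s_2}}_{j_2 i_2}\app{f_\IN}{j_1,j_2},
\end{equation}
which is already a linear combination of the input values; the remaining task is to show that every coefficient vanishes unless $\seqprn{j_1,j_2}\in\app{\mathcal{N}}{i_1,i_2}$.

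The key ingredient is a standard locality property for powers of Laplacian-type matrices: for a graph $G$ with Laplacian $\mat{L}=\mat{D}-\mat{W}$, the entry $\paren{\mat{L}^{s}}_{ij}$ can be nonzero only if $j$ is reachable from $i$ through at most $s$ edges of $G$. I would prove this by a short induction on $s$. The base case $s=0$ is immediate from $\mat{L}^0=\matid$. For the inductive step, $\paren{\mat{L}^{s+1}}_{ij}=\sum_{k}\paren{\mat{L}^{s}}_{ik}\mat{L}_{kj}$; since $\mat{L}_{kj}\neq 0$ forces either $k=j$ (from the diagonal $\mat{D}$) or $\setprn{k,j}\in E$ (from $\mat{W}$), any path witnessing the nonvanishing of $\paren{\mat{L}^{s}}_{ik}$ extends to a walk of length at most $s+1$ from $i$ to $j$ in $G$.

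Applying this locality lemma to $\mat{L}_1$ and $\mat{L}_2$ separately, the product $h_{s_1s_2}\paren{\mat{L}_1^{s_1}}_{i_1 j_1}\paren{\mat{L}_2^{s_2}}_{j_2 i_2}$ can contribute only when $h_{s_1s_2}\neq 0$, $j_1$ is reachable from $i_1$ through some $t_1\leq s_1$ edges in $G_1$, and $j_2$ is reachable from $i_2$ through some $t_2\leq s_2$ edges in $G_2$. These are exactly the conditions defining $\seqprn{j_1,j_2}\in\app{\mathcal{N}}{i_1,i_2}$, so collecting terms yields the claim.

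The main obstacle is simply the locality lemma for Laplacian powers, but this is a routine induction with no real technical difficulty; beyond that the proof is just bookkeeping on the entry-wise expansion from \cref{prop:2Dfilter_vertex}.
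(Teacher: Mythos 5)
Your proposal is correct and follows essentially the same route as the paper: the paper proves this corollary by combining \cref{prop:2Dfilter_vertex} with Lemma~5.2 of Hammond et al.\ (the locality of powers of the graph Laplacian), which is exactly the lemma you re-derive by induction before applying it factor-wise to $\mat{L}_1^{s_1}$ and $\mat{L}_2^{s_2}$. The only difference is that you supply the short inductive proof of that locality lemma rather than citing it.
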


We can easily prove this corollary from \cref{prop:2Dfilter_vertex} and~\citep[Lemma 5.2]{Hammond2011}.
\Cref{cor:2Dfilter_locality} shows that the 2-D graph spectral filter defined by a kernel $\vhat{h}_{S_1S_2}$ propagates a signal element $\app{f_\IN}{i_1,i_2}$ only in the local neighborhood $\app{\mathcal{N}}{i_1,i_2}$.

\subsection{Optimization filtering}\label{subsec:opt_filtering}
Given a noisy observation of graph signals, consider an estimation of its original signal.
Such problems are sometimes attributed to optimizations, referred to as \newword{an optimization filtering} in this study.
Suppose a graph signal $y$ is observed.
In an optimization filtering framework, we estimate the original graph signal by
\begin{equation}
  x_\opt\in\argmin_{x}\norm[p]{x-y}^p+\app{\mathcal{S}}{x}
\end{equation}
with some function $\mathcal{S}$ that represents how strongly the signal changes along the graph, similar to total quadratic variation.
The term $\app{\mathcal{S}}{x}$ smooths the estimator on the graph, and the other term $\norm[p]{x-y}^p$ brings it close to the observation.
Several optimization filtering frameworks are discussed in~\citep{Zhou2004,Zhou2005}.

On a Cartesian product graph, our 2-D GFT suggests separately handling two signal fluctuations along the factor graphs, even though existing optimization filtering methods do not do so.
In this subsection, a multi-dimensional version of several optimization filtering frameworks will be proposed.

We multi-dimensionalize \newword{an extended basic energy model (EBEM)}~\citep{Grady2010}, a subclass of an optimization filtering in this section.
Let $G=\seqprn{V,E,w}$ be an undirected weighted graph with vertex set $V=\setprn{0,\ldots,N-1}$ and $y$ an observed signal on the graph.
The EBEM designs an ``energy'' of a graph signal $\funcdoms{x}{V}{\setR}$ on the graph as
\begin{equation}\label{eq:EBEM}
  \app{\mathcal{E}^G_\gamma}{x}
  =\norm[p]{x-y}^p+\frac{\gamma}{2}\sum_{i=0}^{N-1}\sum_{j=0}^{N-1}\app{w}{i,j}\absprn{\app{x}{i}-\app{x}{j}}^q
\end{equation}
with a regularization parameter $\gamma$, and estimates the true signal by the minimizer $x_\opt\in\argmin_{x}\app{\mathcal{E}^G_\gamma}{x}$.

Next, we will discuss the EBEM on a Cartesian product graph.
Let $G_n=\seqprn{V_n,E_n,w_n}$ be an undirected weighted graph with vertex set $V_n=\setprn{0,\ldots,N_n-1}$ for $n=1,2$ and $y$ an observed signal on the product graph $G_1\cprod G_2$.
The EBEM energy of a graph signal $\funcdoms{x}{V_1\times V_2}{\setR}$ on the product graph is given by
\begin{align}
  &\app{\mathcal{E}^{G_1\cprod G_2}_\gamma}{x}
  \\
  &=\norm[p]{x-y}^p+\frac{\gamma}{2}\sum_{i_1}\sum_{j_1}\app{w_1}{i_1,j_1}\norm[q]{\app{x}{i_1,\placeholder}-\app{x}{j_1,\placeholder}}^q
  \\
  &\hphantom{=}+\frac{\gamma}{2}\sum_{i_2}\sum_{j_2}\app{w_2}{i_2,j_2}\norm[q]{\app{x}{\placeholder,i_2}-\app{x}{\placeholder,j_2}}^q
\end{align}
in which parameters $\gamma$ and $q$ are isotropic along $G_1$ and $G_2$.
It is a natural extension to make the parameters anisotropic on factor graphs.

\begin{definition}[Two-dimensional extended basic energy model]
  Suppose a graph signal $\funcdoms{y}{V_1\times V_2}{\setR}$ on a Cartesian product graph $G_1\cprod G_2$ is observed.
  A two-dimensional extended basic energy model (2-D EBEM) estimates the original signal by a minimizer $\funcdoms{x}{V_1\times V_2}{\setR}$ of an energy
  \begin{align}
    &\app{\mathcal{E}^{G_1G_2}_{\gamma_1\gamma_2}}{x}
    \\
    &=\norm[p]{x-y}^p
    +\frac{\gamma_1}{2}\sum_{i_1=0}^{N_1-1}\sum_{j_1=0}^{N_1-1}\app{w_1}{i_1,j_1}\norm[q_1]{\app{x}{i_1,\placeholder}-\app{x}{j_1,\placeholder}}^{q_1}
    \\\label{eq:2-DEBEM}
    &\hphantom{=}+\frac{\gamma_2}{2}\sum_{i_2=0}^{N_2-1}\sum_{j_2=0}^{N_2-1}\app{w_2}{i_2,j_2}\norm[q_2]{\app{x}{\placeholder,i_2}-\app{x}{\placeholder,j_2}}^{q_2}.
  \end{align}
\end{definition}

In a 2-D EBEM, controlling regularization weights $\gamma_1$ and $\gamma_2$, or regularization dimensions $q_1$ and $q_2$ separately, we can design a factor-graph-anisotropic energy.
In particular, anisotropy between dimensional parameters is essential, while anisotropy between weight parameters can be attributed to the weight functions of graphs.

\section{Stationarity}\label{sec:stationarity}
This section explains that the proposed multi-dimensional graph spectral filtering framework drives the development of new stationarities of multi-dimensional random graph signals.

First, we refer to the stationarity of 1-D random graph signals.
Let $G$ be an undirected weighted graph with vertex set $V=\setprn{0,\ldots,N-1}$.
A zero-mean random graph signal $x$ on $G$, i.e., a zero-mean random variable on $V$, is \newword{(weak) stationary} when $x$ is an output of some $\paren{N-1}$-degree polynomial filter applied to a white noise $z$ on $V$~\citep{Segarra2017}.
Note that $z$ is independent, identically distributed, and satisfies $\EV{\app{z}{i}}=0$ and $\Cov{\app{z}{i},\app{z}{j}}=\app{\delta}{i,j}$.
In other words, by putting $\vec{x}=\trsps{\seqprn{\app{x}{0}\:\cdots\:\app{x}{N-1}}}$ and $\vec{z}=\trsps{\seqprn{\app{z}{0}\:\cdots\:\app{z}{N-1}}}$, a zero-mean random graph signal $x$ on $G$ is said to be stationary if some $h_0,\ldots,h_{N-1}\in\setC$ satisfy
\begin{align}
  \vec{x}=\paren{\sum_{s=0}^{N-1}h_s\mat{L}^s}\vec{z},
\end{align}
where $\mat{L}$ is the Laplacian matrix of $G$.

The previous study~\citep{Segarra2017} pointed out that for a stationary graph signal, its covariance matrix and a Laplacian matrix of the graph are simultaneously diagonalizable; furthermore, when these two matrices are simultaneously diagonalizable and the eigenvalues of the Laplacian matrix are distinct, the graph signal is stationary.

In this study, we propose two stationarities for 2-D graph signals: factor-graph-wise stationarity and directional stationarity.
These 2-D stationarities naturally deduce their $n$-D versions for an arbitrary natural number $n$.
For a signal on a product of a cycle graph and any graph, corresponding concepts already exist~\citep{Loukas2017}.
Our stationarities generalize these concepts.

Using an analogy to the existing stationarity, \newword{factor-graph-wise stationary} signals are defined as outputs of 2-D graph polynomial filters applied to 2-D white noise.
For $n=1,2$, let $G_n$ be an undirected weighted graph with vertex set $V_n=\setprn{0,\ldots,N_n-1}$ whose graph Laplacian matrix $\mat{L}_n$ has ascending eigenvalues $\rawcurlybrace{\uid{\lambda_k}{n}}_{k=0,\ldots,N_n-1}$.
Let $z$ be white noise on $V_1\times V_2$, i.e., independent and identically distributed, $\EV{\app{z}{i_1,i_2}}=0$, and $\Cov{\app{z}{i_1,i_2},\app{z}{j_1,j_2}}=\app{\delta}{i_1,j_1}\app{\delta}{i_2,j_2}$.

\begin{definition}[Factor-graph-wise stationarity]
  A zero-mean random graph signal $x$ on a Cartesian product graph $G_1\cprod G_2$ is said to be factor-graph-wise stationary if some $h_{00},h_{01},\ldots,h_{\paren{N_1-1}\paren{N_2-1}}\in\setC$ satisfy
  \begin{align}\label{eq:fgwise_stationarity}
    \mat{X}=\sum_{s_1=0}^{N_1-1}\sum_{s_2=0}^{N_2-1}h_{s_1s_2}\mat{L}_1^{s_1}\mat{Z}\mat{L}_2^{s_2},
  \end{align}
  where $\mat{X}$ and $\mat{Z}$ are $N_1\times N_2$ matrices $\seqprn{\app{x}{i_1,i_2}}_{i_1,i_2}$ and $\seqprn{\app{z}{i_1,i_2}}_{i_1,i_2}$, respectively.
\end{definition}

We present two important theorems regarding factor-graph-wise stationary signals below.
Define the following matrices:
\begin{itemize}
  \item an $N_1\times N_1$ matrix $\Cov{\app{x}{\placeholder,i_2},\app{x}{\placeholder,j_2}}$ whose $\seqprn{i_1,j_1}$-th element is $\Cov{\app{x}{i_1,i_2},\app{x}{j_1,j_2}}$,
  \item an $N_2\times N_2$ matrix $\Cov{\app{x}{i_1,\placeholder},\app{x}{j_1,\placeholder}}$ whose $\seqprn{i_2,j_2}$-th element is $\Cov{\app{x}{i_1,i_2},\app{x}{j_1,j_2}}$, and
  \item an $N_1N_2\times N_1N_2$ matrix $\Cov{x}$ whose $\seqprn{N_2i_1+i_2,N_2j_1+j_2}$-th element is $\Cov{\app{x}{i_1,i_2},\app{x}{j_1,j_2}}$.
\end{itemize}
Let $\vhat{x}$ be a 2-D spectrum of $x$ obtained by the 2-D GFT.

\begin{theorem}\label{th:fgwise_simdiag_to_spectral_uncorrelatedness}
  For a zero-mean random graph signal $x$ on a Cartesian product graph $G_1\cprod G_2$, the following three conditions are equivalent:
  \begin{enumerate}
    \item matrices $\Cov{\app{x}{\placeholder,i_2},\app{x}{\placeholder,j_2}}$ and $\mat{L}_1$ are simultaneously diagonalizable for any $i_2,j_2=0,\ldots,N_2-1$, and matrices $\Cov{\app{x}{i_1,\placeholder},\app{x}{j_1,\placeholder}}$ and $\mat{L}_2$ are simultaneously diagonalizable for any $i_1,j_1=0,\ldots,N_1-1$,
    \item a covariance between spectra $\app{\vhat{x}}{\uid{\lambda_{k_1}}{1},\uid{\lambda_{k_2}}{2}}$ and $\app{\vhat{x}}{\uid{\lambda_{l_1}}{1},\uid{\lambda_{l_2}}{2}}$ is equal to zero at $k_1\neq l_1$ and $k_2\neq l_2$, and
    \item matrices $\Cov{x}$ and $\mat{L}_1\ksum\mat{L}_2$ are simultaneously diagonalizable.
  \end{enumerate}
\end{theorem}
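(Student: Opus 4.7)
The three conditions are linked by a single bilinear identity, and I would prove the theorem as a cycle $(1)\Rightarrow(2)\Rightarrow(3)\Rightarrow(1)$. Substituting the definition of the 2-D GFT into the covariance of two spectra and exploiting its bilinearity yields, with the shorthand $C_{i_2,j_2}\eqdef\Cov{\app{x}{\placeholder,i_2},\app{x}{\placeholder,j_2}}$,
\begin{align*}
\Cov{\app{\vhat{x}}{\vuid{\lambda_{k_1}}{1},\vuid{\lambda_{k_2}}{2}},\app{\vhat{x}}{\vuid{\lambda_{l_1}}{1},\vuid{\lambda_{l_2}}{2}}}
=\sum_{i_2,j_2=0}^{N_2-1}\wideconj{\app{\uid{u_{k_2}}{2}}{i_2}}\app{\uid{u_{l_2}}{2}}{j_2}\iprod{\uid{u_{k_1}}{1}}{C_{i_2,j_2}\uid{u_{l_1}}{1}},
\end{align*}
together with a mirror identity obtained by factoring the $V_1$-sum first, involving $D_{i_1,j_1}\eqdef\Cov{\app{x}{i_1,\placeholder},\app{x}{j_1,\placeholder}}$ and the eigenfunctions $\uid{u_k}{1}$ of $\mat{L}_1$.

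For $(1)\Rightarrow(2)$, the hypothesis that each $C_{i_2,j_2}$ is simultaneously diagonalizable with $\mat{L}_1$ says the orthonormal eigenbasis $\rawcurlybrace{\uid{u_k}{1}}$ of $\mat{L}_1$ also diagonalizes $C_{i_2,j_2}$, so $\iprod{\uid{u_{k_1}}{1}}{C_{i_2,j_2}\uid{u_{l_1}}{1}}$ vanishes for $k_1\neq l_1$ and the covariance is zero whenever $k_1\neq l_1$. The same argument applied to the mirror identity under the second half of $(1)$ kills the covariance whenever $k_2\neq l_2$, establishing $(2)$.

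For $(2)\Leftrightarrow(3)$, the matrix form of the 2-D GFT in \cref{eq:2DGFT_mat} corresponds (after vectorization) to the unitary transform by $\mat{U}_1\kprod\mat{U}_2$, and by the Kronecker-sum identity discussed in \cref{subsec:cpgraph} this tensor product is itself an orthonormal eigenbasis of $\mat{L}_1\ksum\mat{L}_2$. Hence diagonality of $\Cov{\vhat{x}}$ in the frequency indexing (condition $(2)$) is the same statement as $\Cov{x}$ being diagonalized by an eigenbasis of $\mat{L}_1\ksum\mat{L}_2$ (condition $(3)$), once one agrees to use the canonical common basis on both sides.

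For $(3)\Rightarrow(1)$, I would invert the bilinear identity. Under $(3)$, $\Cov{\vhat{x}}$ is diagonal, so the left-hand side is zero for every $k_2,l_2$ when $k_1\neq l_1$. Multiplying the identity by $\app{\uid{u_{k_2}}{2}}{i_2'}\wideconj{\app{\uid{u_{l_2}}{2}}{j_2'}}$, summing over $k_2,l_2$, and using the completeness relation $\sum_{k}\wideconj{\app{\uid{u_k}{2}}{i_2}}\app{\uid{u_k}{2}}{i_2'}=\app{\delta}{i_2,i_2'}$ collapses the right-hand side to $\iprod{\uid{u_{k_1}}{1}}{C_{i_2',j_2'}\uid{u_{l_1}}{1}}$, which must therefore vanish for all $i_2',j_2'$ whenever $k_1\neq l_1$. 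This is exactly the statement that $\mat{U}_1$ diagonalizes each $C_{i_2',j_2'}$ simultaneously with $\mat{L}_1$; the mirror argument delivers the $D_{i_1,j_1}$ half of $(1)$. The main obstacle is the usual degeneracy point: when $\mat{L}_1\ksum\mat{L}_2$ has repeated eigenvalues, ``simultaneously diagonalizable'' in $(3)$ is strictly weaker than diagonality in the canonical tensor-product basis $\mat{U}_1\kprod\mat{U}_2$, so this direction is cleanest under the distinct-eigenvalue convention the paper already invokes for its 1-D precursor—otherwise one must work eigenspace-by-eigenspace, exploiting the freedom to choose a tensor-product eigenbasis inside each degenerate block.
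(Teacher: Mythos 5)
Your proposal is correct and follows essentially the same route as the paper's proof: the same bilinear identity expressing spectral covariances as quadratic forms $\adjoint{\mat{U}_1}\Cov{\app{x}{\placeholder,i_2},\app{x}{\placeholder,j_2}}\mat{U}_1$ (and its mirror) drives $1)\Leftrightarrow 2)$, and the same Kronecker-product eigenbasis of $\mat{L}_1\ksum\mat{L}_2$ gives $2)\Leftrightarrow 3)$, with your cyclic organization and your symmetric use of the two halves of condition 1) being only cosmetic differences. The degeneracy caveat you flag for $(3)\Rightarrow(1)$ --- that ``simultaneously diagonalizable'' only guarantees \emph{some} common eigenbasis, not the canonical $\mat{U}_1\kprod\mat{U}_2$ --- is a genuine subtlety, but it is equally present and left unaddressed in the paper's own ``obviously equivalent'' step.
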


\begin{theorem}\label{th:fgwise_stationarity_nscond}
  If a zero-mean random graph signal $x$ on a Cartesian product graph $G_1\cprod G_2$ is factor-graph-wise stationary, the following two statements hold:
  \begin{enumerate}
    \item matrices $\Cov{\app{x}{\placeholder,i_2},\app{x}{\placeholder,j_2}}$ and $\mat{L}_1$ are simultaneously diagonalizable for any $i_2,j_2=0,\ldots,N_2-1$, and
    \item matrices $\Cov{\app{x}{i_1,\placeholder},\app{x}{j_1,\placeholder}}$ and $\mat{L}_2$ are simultaneously diagonalizable for any $i_1,j_1=0,\ldots,N_1-1$.
  \end{enumerate}
  When eigenvalues of $\mat{L}_1$ are distinct and eigenvalues of $\mat{L}_2$ are distinct, the converse holds.
\end{theorem}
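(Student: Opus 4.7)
The plan is to translate the polynomial-filter relation of \cref{eq:fgwise_stationarity} into the 2-D frequency domain and then invoke Theorem~\ref{th:fgwise_simdiag_to_spectral_uncorrelatedness}, which already packages the equivalence between spectral uncorrelatedness and simultaneous diagonalizability.

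\textbf{Forward direction.} I apply the matrix form of the 2-D GFT from \cref{eq:2DGFT_mat} to both sides of \cref{eq:fgwise_stationarity}. Inserting $\mat{L}_n=\mat{U}_n\mat{\Lambda}_n\adjoint{\mat{U}_n}$ for $n=1,2$ yields
\begin{equation}
  \vhat{\mat{X}}=\sum_{s_1=0}^{N_1-1}\sum_{s_2=0}^{N_2-1}h_{s_1s_2}\mat{\Lambda}_1^{s_1}\vhat{\mat{Z}}\mat{\Lambda}_2^{s_2},
\end{equation}
so each spectral entry is scaled by the value of the bivariate polynomial $\vhat{h}$ at the corresponding eigenvalue pair. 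Since $\mat{U}_2\kprod\mat{U}_1$ is unitary, $\vhat{\mat{Z}}$ inherits the white-noise property of $\mat{Z}$; hence $\app{\vhat{x}}{\uid{\lambda_{k_1}}{1},\uid{\lambda_{k_2}}{2}}$ and $\app{\vhat{x}}{\uid{\lambda_{l_1}}{1},\uid{\lambda_{l_2}}{2}}$ are uncorrelated whenever $(k_1,k_2)\neq(l_1,l_2)$. This is strictly stronger than condition~(2) of Theorem~\ref{th:fgwise_simdiag_to_spectral_uncorrelatedness}, whose equivalence with condition~(1) then delivers the two simultaneous-diagonalizability claims of the forward statement.

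\textbf{Converse under distinct spectra.} Simplicity of the spectrum of $\mat{L}_1$ forces every matrix commuting with $\mat{L}_1$ to be diagonal in the $\mat{U}_1$ basis, and symmetrically for $\mat{L}_2$. The hypothesis therefore implies that $\adjoint{\mat{U}_1}\Cov{\app{x}{\placeholder,i_2},\app{x}{\placeholder,j_2}}\mat{U}_1$ is diagonal for every $i_2,j_2$. Performing the 2-D GFT along the second factor, I deduce $\Cov{\app{\vhat{x}}{\uid{\lambda_{k_1}}{1},\uid{\lambda_{k_2}}{2}},\app{\vhat{x}}{\uid{\lambda_{l_1}}{1},\uid{\lambda_{l_2}}{2}}}=0$ whenever $k_1\neq l_1$, regardless of $k_2,l_2$; the analogous argument using the second hypothesis kills the case $k_2\neq l_2$. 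The surviving diagonal entries $\sigma_{k_1k_2}^2:=\Var{\app{\vhat{x}}{\uid{\lambda_{k_1}}{1},\uid{\lambda_{k_2}}{2}}}$ are nonnegative reals. Because the $\uid{\lambda_{k_1}}{1}$ (resp.\ $\uid{\lambda_{k_2}}{2}$) are pairwise distinct, bivariate Lagrange interpolation on the $N_1\times N_2$ grid $\{(\uid{\lambda_{k_1}}{1},\uid{\lambda_{k_2}}{2})\}$ produces a unique polynomial $\vhat{h}$ of bidegree $(N_1-1,N_2-1)$ with $\vhat{h}(\uid{\lambda_{k_1}}{1},\uid{\lambda_{k_2}}{2})=\sigma_{k_1k_2}$. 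The associated filter $\sum h_{s_1s_2}\mat{L}_1^{s_1}(\placeholder)\mat{L}_2^{s_2}$ applied to a white noise $\mat{Z}$ then reproduces the spectral covariance computed above, so $\mat{X}$ and the filtered white noise share second-order statistics, certifying factor-graph-wise stationarity.

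\textbf{Main obstacle.} The delicate step is the converse: Theorem~\ref{th:fgwise_simdiag_to_spectral_uncorrelatedness} by itself guarantees only ``doubly off-diagonal'' uncorrelatedness, and closing the gap to a fully diagonal spectral covariance is precisely where simplicity of the factor-graph spectra is indispensable, through the standard fact that commutation with a simple-spectrum matrix pins down the eigenbasis. Once that is in hand, the bivariate interpolation and the convention $\sigma_{k_1k_2}\geq 0$ that fixes the phase of $\vhat{h}$ are routine.
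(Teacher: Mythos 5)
Your proposal is correct and takes essentially the same route as the paper: your forward direction (diagonalize the filter in the 2-D frequency domain, note that the unitarily transformed noise stays white, then invoke \cref{th:fgwise_simdiag_to_spectral_uncorrelatedness}) is exactly the paper's \cref{lem:fgwise_stationary_spcov} followed by \cref{th:fgwise_simdiag_to_spectral_uncorrelatedness}, and your converse via bivariate Lagrange interpolation of the nonnegative values $\sigma_{k_1k_2}$ on the tensor grid of eigenvalues is the same construction as the paper's $\mat{H}=\mat{\Psi}_1^{-1}\sqrt{\mat{\Gamma}}\mat{\Psi}_2^{-1}$ built from inverse Vandermonde matrices. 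The only difference is one of emphasis: you re-derive the fully diagonal spectral covariance directly and locate the role of distinct eigenvalues there, whereas the paper obtains full diagonality already in \cref{th:fgwise_simdiag_to_spectral_uncorrelatedness} and uses distinctness only for the invertibility of $\mat{\Psi}_1$ and $\mat{\Psi}_2$ --- both uses are present in your write-up, so this is not a gap.
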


Proofs of \cref{th:fgwise_stationarity_nscond,th:fgwise_simdiag_to_spectral_uncorrelatedness} are shown in the appendix.

The theorems show us the following two facts.
First, when assuming distinct eigenvalues of each factor graph, the factor-graph-wise stationarity is equal to the uncorrelatedness between different spectral components.
Second, when assuming distinct eigenvalues of the product graph, the factor-graph-wise stationarity is also equal to the existing stationarity.
\Cref{fig:stationarities_relationship} shows the relationship between stationarities about 2-D signals (concepts in the bottom row appears later in this subsection).
Note that if the eigenvalues of the product graph are distinct, the eigenvalues of each factor graph are distinct as well.
In this case, the existing stationarity and the factor-graph-wise stationarity are equivalent.
However, even if the eigenvalues of each factor graph are distinct, the eigenvalues of the product graph are not always distinct.
In this case, the existing stationarity implies factor-graph-wise stationarity, but the reverse is not always true.

\begin{figure*}
  \centering
  \includegraphics[width=\linewidth]{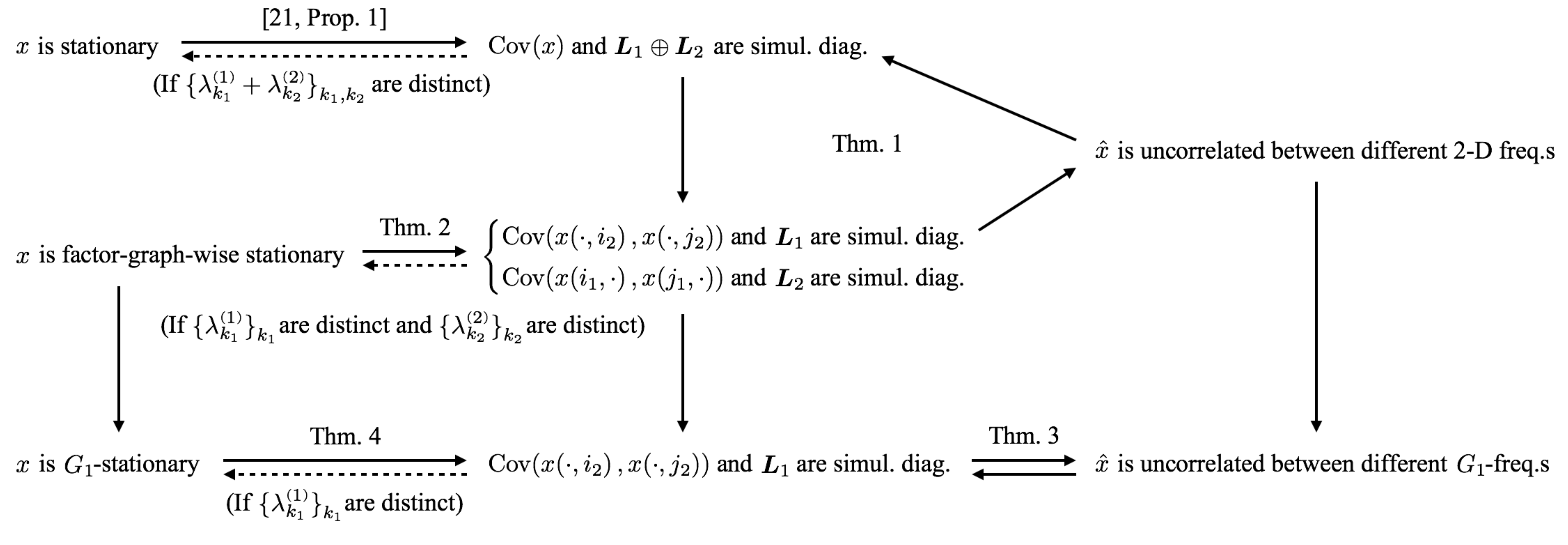}
  \caption{Relationships of various stationarities about a signal $x$ on a graph $G_1\cprod G_2$. The phrase ``simul. diag.'' stands for ``simultaneously diagonalizable.'' A solid arrow from $P$ to $Q$ means ``if $P$ then $Q$,'' and a dashed arrow from $P$ to $Q$ means ``if $P$ then $Q$ under some conditions.''}
  \label{fig:stationarities_relationship}
\end{figure*}

Next, we propose a directional stationarity that is a ``stationarity along one factor graph,'' whereas the factor-graph-wise stationarity is a ``stationarity along both factor graphs.''

\begin{definition}[Directional stationarity]
  A zero-mean random graph signal $x$ on a Cartesian product graph $G_1\cprod G_2$ is said to be $G_1$-stationary if some matrices $\mat{H}_0,\ldots,\mat{H}_{N_1-1}\in\setC^{N_2\times N_2}$ satisfy
  \begin{align}
    \mat{X}=\sum_{s_1=0}^{N_1-1}\mat{L}_1^{s_1}\mat{Z}\mat{H}_{s_1},\label{eq:G1_stationarity}
  \end{align}
  and is said to be $G_2$-stationary if some matrices $\mat{H}_0,\ldots,\mat{H}_{N_2-1}\in\setC^{N_1\times N_1}$ satisfy
  \begin{align}
    \mat{X}=\sum_{s_2=0}^{N_2-1}\mat{H}_{s_2}\mat{Z}\mat{L}_2^{s_2}.\label{eq:G2_stationarity}
  \end{align}
\end{definition}

In \cref{subsec:multivariate_stationarity}, the directional stationarity of 2-D graph signals deduces a stationarity of multivariate graph signals.
We present two important theorems regarding directionally stationary signals below.

\begin{theorem}\label{th:directional_simdiag_to_spectral_uncorrelatedness}
  \textbf{(A)} For a zero-mean random graph signal $x$ on a Cartesian product graph $G_1\cprod G_2$, the following two conditions are equivalent:
  \begin{enumerate}
    \item matrices $\Cov{\app{x}{\placeholder,i_2},\app{x}{\placeholder,j_2}}$ and $\mat{L}_1$ are simultaneously diagonalizable for any $i_2,j_2=0,\ldots,N_2-1$, and
    \item a covariance between spectra $\app{\vhat{x}}{\uid{\lambda_{k_1}}{1},\uid{\lambda_{k_2}}{2}}$ and $\app{\vhat{x}}{\uid{\lambda_{l_1}}{1},\uid{\lambda_{l_2}}{2}}$ is equal to zero at $k_1\neq l_1$.
  \end{enumerate}

  \textbf{(B)} For a zero-mean random graph signal $x$ on a Cartesian product graph $G_1\cprod G_2$, the following two conditions are equivalent:
  \begin{enumerate}
    \item matrices $\Cov{\app{x}{i_1,\placeholder},\app{x}{j_1,\placeholder}}$ and $\mat{L}_2$ are simultaneously diagonalizable for any $i_1,j_1=0,\ldots,N_1-1$, and
    \item a covariance between spectra $\app{\vhat{x}}{\uid{\lambda_{k_1}}{1},\uid{\lambda_{k_2}}{2}}$ and $\app{\vhat{x}}{\uid{\lambda_{l_1}}{1},\uid{\lambda_{l_2}}{2}}$ is equal to zero at $k_2\neq l_2$.
  \end{enumerate}
\end{theorem}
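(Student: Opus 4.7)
The plan is to prove part (A) by a direct computation of the spectral cross‑covariance and then obtain part (B) by exchanging the roles of the two factor graphs. The one identity that drives everything is already available from the matrix form of the 2-D GFT in \cref{eq:2DGFT_mat}, namely $\vhat{\mat{X}} = \adjoint{\mat{U}_1}\mat{X}\wideconj{\mat{U}}_2$, from which each spectral component is an explicit linear combination of the $\app{x}{i_1,i_2}$ with coefficients $\wideconj{\app{\uid{u_{k_1}}{1}}{i_1}}\,\wideconj{\app{\uid{u_{k_2}}{2}}{i_2}}$.

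First I would expand, using bilinearity of covariance (and $\Cov{cX,dY}=c\wideconj{d}\Cov{X,Y}$ for deterministic complex scalars), to obtain
\begin{equation}
  \Cov{\app{\vhat{x}}{\vuid{\lambda_{k_1}}{1},\vuid{\lambda_{k_2}}{2}},\app{\vhat{x}}{\vuid{\lambda_{l_1}}{1},\vuid{\lambda_{l_2}}{2}}}
  =\sqbracket{\adjoint{\mat{U}_1}\,\mat{C}_{k_2,l_2}\,\mat{U}_1}_{k_1,l_1},
\end{equation}
where
\begin{equation}
  \mat{C}_{k_2,l_2}
  \eqdef
  \sum_{i_2=0}^{N_2-1}\sum_{j_2=0}^{N_2-1}\wideconj{\app{\uid{u_{k_2}}{2}}{i_2}}\,\app{\uid{u_{l_2}}{2}}{j_2}\,\Cov{\app{x}{\placeholder,i_2},\app{x}{\placeholder,j_2}}
\end{equation}
is an $N_1\times N_1$ matrix. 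This single identity is the workhorse for both implications.

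For the forward direction (1)$\Rightarrow$(2) of part (A), I would note that if each $\Cov{\app{x}{\placeholder,i_2},\app{x}{\placeholder,j_2}}$ is simultaneously diagonalized by $\mat{U}_1$ with $\mat{L}_1$, then every $\adjoint{\mat{U}_1}\Cov{\app{x}{\placeholder,i_2},\app{x}{\placeholder,j_2}}\mat{U}_1$ is diagonal, hence the linear combination $\adjoint{\mat{U}_1}\mat{C}_{k_2,l_2}\mat{U}_1$ is diagonal too, and its $(k_1,l_1)$ entry vanishes for every $k_1\neq l_1$. For the reverse direction (2)$\Rightarrow$(1), I would fix an off-diagonal pair $k_1\neq l_1$ and assemble the $N_2\times N_2$ matrix $\mat{\Phi}$ with entries $\app{\Phi}{i_2,j_2}=\sqbracket{\adjoint{\mat{U}_1}\Cov{\app{x}{\placeholder,i_2},\app{x}{\placeholder,j_2}}\mat{U}_1}_{k_1,l_1}$. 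The displayed identity shows that condition (2) is precisely $\adjoint{\mat{U}_2}\mat{\Phi}\,\vtrsps{\mat{U}_2}=\mat{0}$, and since $\mat{U}_2$ is unitary and therefore $\adjoint{\mat{U}_2}$ and $\vtrsps{\mat{U}_2}$ are both invertible, this forces $\mat{\Phi}=\mat{0}$. Sweeping $(k_1,l_1)$ over all off-diagonal pairs yields that each $\adjoint{\mat{U}_1}\Cov{\app{x}{\placeholder,i_2},\app{x}{\placeholder,j_2}}\mat{U}_1$ is diagonal, which is condition (1). Part (B) is then obtained by a symmetric argument that transposes the roles of the two factor graphs: one rewrites $\vhat{\mat{X}}=\adjoint{\mat{U}_1}\mat{X}\wideconj{\mat{U}}_2$ column‑by‑column rather than row‑by‑row, grouping the $\mat{U}_2$ conjugates outside, and repeats the same diagonalization reasoning with $\mat{U}_1$ playing the auxiliary invertibility role.

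The main obstacle will be purely one of bookkeeping with complex conjugates: the 2-D GFT uses $\adjoint{\mat{U}_1}$ on the left but $\wideconj{\mat{U}}_2$ (not $\adjoint{\mat{U}_2}$) on the right, so the conjugates on the eigenvector entries in $\mat{C}_{k_2,l_2}$ must be placed correctly for the step "$\adjoint{\mat{U}_2}\mat{\Phi}\vtrsps{\mat{U}_2}=\mat{0}\Rightarrow\mat{\Phi}=\mat{0}$" to be immediate. Once the identity for the spectral cross‑covariance is derived, no assumption on distinctness of eigenvalues of $\mat{L}_1$ or $\mat{L}_2$ is required, because the entire argument uses only invertibility of $\mat{U}_1$ and $\mat{U}_2$, in contrast to \cref{th:fgwise_stationarity_nscond} whose converse did need such an assumption.
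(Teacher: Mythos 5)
Your proof is correct and takes essentially the same route as the paper's: both rest on the identity expressing the spectral cross-covariance as the $(k_1,l_1)$ entry of $\adjoint{\mat{U}_1}\mat{C}_{k_2,l_2}\mat{U}_1$ for a $\mat{U}_2$-weighted combination $\mat{C}_{k_2,l_2}$ of the column covariances, with the forward direction by linearity of diagonal matrices and the reverse by inverting the unitary change of basis (which the paper carries out via an explicit inversion formula for $\vec{\alpha}_{i_2j_2}$ rather than your abstract $\mat{\Phi}=\mat{0}$ step). The only nit is the conjugate placement you already flagged: with your definition of $\mat{C}_{k_2,l_2}$ the assembled condition reads $\adjoint{\mat{U}_2}\mat{\Phi}\mat{U}_2=\mat{0}$ rather than $\adjoint{\mat{U}_2}\mat{\Phi}\vtrsps{\mat{U}_2}=\mat{0}$, which changes nothing since both outer factors are invertible either way.
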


\begin{theorem}\label{th:directional_stationarity_nscond}
  \textbf{(A)} If a zero-mean random graph signal $x$ on a Cartesian product graph $G_1\cprod G_2$ is $G_1$-stationary, matrices $\Cov{\app{x}{\placeholder,i_2},\app{x}{\placeholder,j_2}}$ and $\mat{L}_1$ are simultaneously diagonalizable for any $i_2,j_2=0,\ldots,N_2-1$.
  When eigenvalues of $\mat{L}_1$ are distinct, the converse holds.

  \textbf{(B)} If a zero-mean random graph signal $x$ on a Cartesian product graph $G_1\cprod G_2$ is $G_2$-stationary, matrices $\Cov{\app{x}{i_1,\placeholder},\app{x}{j_1,\placeholder}}$ and $\mat{L}_2$ are simultaneously diagonalizable for any $i_1,j_1=0,\ldots,N_1-1$.
  When eigenvalues of $\mat{L}_2$ are distinct, the converse holds.
\end{theorem}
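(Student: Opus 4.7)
The plan is to prove part (A) in detail and to obtain part (B) by exchanging the two factors and transposing. For the forward direction of (A), substitute the $G_1$-stationarity relation $\mat{X}=\sum_{s_1=0}^{N_1-1}\mat{L}_1^{s_1}\mat{Z}\mat{H}_{s_1}$ into a column cross-covariance of $\mat{X}$. Writing $\vec{h}_{s_1,i_2}$ for the $i_2$-th column of $\mat{H}_{s_1}$ and using that $\mat{L}_1$ is real symmetric, one obtains
\begin{equation*}
\Cov{\app{x}{\placeholder,i_2},\app{x}{\placeholder,j_2}}
=\sum_{s_1,s_1'=0}^{N_1-1}\mat{L}_1^{s_1}\,\EV{\mat{Z}\vec{h}_{s_1,i_2}\adjoint{\vec{h}_{s_1',j_2}}\adjoint{\mat{Z}}}\,\mat{L}_1^{s_1'}.
\end{equation*}
A direct entrywise computation using the whiteness condition $\EV{\app{z}{a,b}\wideconj{\app{z}{a',b'}}}=\app{\delta}{a,a'}\app{\delta}{b,b'}$ shows that $\EV{\mat{Z}\vec{a}\adjoint{\vec{b}}\adjoint{\mat{Z}}}=\iprod{\vec{a}}{\vec{b}}\,\matid$, which collapses the double sum to the polynomial $\sum_{s_1,s_1'}\iprod{\vec{h}_{s_1,i_2}}{\vec{h}_{s_1',j_2}}\,\mat{L}_1^{s_1+s_1'}$ in $\mat{L}_1$. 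Any polynomial in a real symmetric matrix commutes with it and is therefore simultaneously diagonalizable with it.

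For the converse, assume the simultaneous-diagonalizability condition holds and the eigenvalues of $\mat{L}_1$ are distinct. Pass to the eigenbasis by setting $\mat{Y}=\vadjoint{\mat{U}_1}\mat{X}$. Since distinct eigenvalues make $\mat{U}_1$ the essentially unique orthogonal diagonalizer of $\mat{L}_1$, every column cross-covariance of $\mat{Y}$ is diagonal; equivalently, the rows of $\mat{Y}$ indexed by different $i_1$ are pairwise uncorrelated. Let $\mat{Q}_{i_1}$ denote the $N_2\times N_2$ covariance matrix of the $i_1$-th row of $\mat{Y}$, and pick a factorization $\mat{Q}_{i_1}=\vadjoint{\mat{P}_{i_1}}\mat{P}_{i_1}$, which exists because $\mat{Q}_{i_1}$ is positive semidefinite.

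The next step is to realize the $\mat{P}_{i_1}$ as polynomial evaluations at eigenvalues. Reversing the forward calculation, a candidate signal $\mat{X}'=\sum_{s_1=0}^{N_1-1}\mat{L}_1^{s_1}\mat{Z}\mat{H}_{s_1}$ built from a fresh white noise $\mat{Z}$ has the same covariance structure as $\mat{X}$ if and only if $\sum_{s_1=0}^{N_1-1}(\uid{\lambda_{i_1}}{1})^{s_1}\mat{H}_{s_1}=\mat{P}_{i_1}$ for every $i_1=0,\ldots,N_1-1$. Viewed entrywise across the $N_2\times N_2$ matrix indices, this is an $N_1\times N_1$ Vandermonde system in the unknown scalar coefficients of $\mat{H}_0,\ldots,\mat{H}_{N_1-1}$. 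Distinctness of $\uid{\lambda_0}{1},\ldots,\uid{\lambda_{N_1-1}}{1}$ makes the Vandermonde matrix invertible, so the system has a solution and the required filter coefficients exist, witnessing $G_1$-stationarity.

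The subtlety I expect will demand the most care is the precise semantics of ``$G_1$-stationary'' in the converse: the reconstruction above produces a signal with the correct second-order statistics rather than a pointwise reproduction of $x$, consistent with the weak-stationarity interpretation used throughout \cref{sec:stationarity} and in~\citep{Segarra2017}. A minor technical point is that when $x$ is complex-valued, $\mat{Q}_{i_1}$ is Hermitian and the factorization is taken in $\setMatrix{N_2}{N_2}{\setC}$, which aligns with the declaration $\mat{H}_{s_1}\in\setMatrix{N_2}{N_2}{\setC}$ in the definition of directional stationarity. Finally, part (B) is obtained by applying the identical argument to $\vtrsps{\mat{X}}$ with the roles of $\seqprn{\mat{L}_1,N_1}$ and $\seqprn{\mat{L}_2,N_2}$ interchanged.
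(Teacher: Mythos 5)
Your proof is correct, and it is worth separating the two directions when comparing it with the paper's. The converse coincides with the paper's argument almost step for step: both pass to the eigenbasis of $\mat{L}_1$, observe that the per-eigenvalue $N_2\times N_2$ covariance blocks (your $\mat{Q}_{i_1}$ are, up to transposition and conjugation, the paper's $\mat{\Gamma}_{k_1}$) are positive semidefinite and hence factorable (the paper uses a Cholesky decomposition), and then recover the taps $\mat{H}_0,\ldots,\mat{H}_{N_1-1}$ by inverting the Vandermonde matrix $\mat{\Psi}_1$, which is exactly where distinctness of the eigenvalues is used. Your closing remark that this only reproduces second-order statistics is fair, but it is the same level of rigor as the paper's own ``can be represented like'' step, so it is not a gap relative to the reference proof. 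The forward direction, by contrast, takes a genuinely different route. The paper first applies the partial transform $\vadjoint{\mat{U}_1}$, proves \cref{lem:G1_stationary_spcov} giving $\Cov{\vtilde{x}_{k_1i_2},\vtilde{x}_{l_1j_2}}=\delta_{k_1l_1}\sum_{\sigma_2}\tilde{h}_{k_1,\sigma_2i_2}\wideconj{\tilde{h}_{k_1,\sigma_2j_2}}$, and reads off diagonality in the spectral domain. You stay in the vertex domain, use that $\EV{\mat{Z}\vec{a}\adjoint{\vec{b}}\adjoint{\mat{Z}}}$ equals $\iprod{\vec{a}}{\vec{b}}$ times the identity, and conclude that $\Cov{\app{x}{\placeholder,i_2},\app{x}{\placeholder,j_2}}$ is literally a polynomial in $\mat{L}_1$. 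That is a slightly stronger statement than simultaneous diagonalizability, it avoids the paper's lemma entirely, and it makes the analogy with the one-dimensional result of~\citep{Segarra2017} more transparent; the only care needed is the transpose/conjugate bookkeeping in the factorization of $\mat{Q}_{i_1}$, for which the paper's use of $\trsps{\mat{\Gamma}_{k_1}}$ is the model. The reduction of part (B) to part (A) by transposition is also how the paper handles it.
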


Proofs of \cref{th:directional_simdiag_to_spectral_uncorrelatedness,th:directional_stationarity_nscond} are shown in the appendix.

\section{Multivariate graph signal processing}\label{sec:multivariate_gsp}

\subsection{Multivariate graph signals}
It is natural to extend graph signals to their multivariate version; however, such signals have rarely been discussed.
Let $G$ be an undirected weighted graph with vertex set $V=\setprn{0,\ldots,N-1}$.
A vector-valued function $\funcdoms{\vec{f}}{V}{\setR^p}$ is called a $p$-variate graph signal on $G$.
Multivariate graph signals often appear in the real world.
For example, RGB signals on pixels are 3-variate signals on the grid graph, and 3-D acceleration measured by scattered seismometers are 3-variate signals on the observation network graph.

The GFT of multivariate graph signals is defined as a variable-wise GFT.
Suppose that the graph Laplacian $\mat{L}$ of $G$ has ascending eigenvalues $\rawcurlybrace{\lambda_k}_{k=0,\ldots,N-1}$ and the corresponding eigenfunctions $\rawcurlybrace{u_k}_{k=0,\ldots,N-1}$ on $V$.

\begin{definition}[Multivariate graph Fourier transform]
  A $p$-variate GFT of a $p$-variate signal $\funcdoms{\vec{f}}{V}{\setR^p}$ on a graph $G$ is a $p$-variate spectrum $\funcdoms{\vhat{\vec{f}}}{\app{\sigma}{\mat{L}}}{\setC^p}$ defined by
  \begin{align}
    \app{\vhat{\vec{f}}}{\lambda_k}
    =\sum_{i=0}^{N-1}\app{\vec{f}}{i}\wideconj{\app{u_k}{i}}
  \end{align}
  for $k=0,\ldots,N-1$, and its inverse is given by
  \begin{align}
    \app{\vec{f}}{i}
    =\sum_{k=0}^{N-1}\app{\vhat{\vec{f}}}{\lambda_k}\app{u_k}{i}
  \end{align}
  for $i=0,\ldots,N-1$.
\end{definition}

Note that the multivariate GFT is represented as a matrix-matrix multiplication.
Let $\funcdoms{\vec{f}}{V}{\setR^p}$ be a $p$-variate graph signal.
By using $N\times p$ matrices $\mat{F}=\trsps{\seqprn{\app{\vec{f}}{0}\:\cdots\:\app{\vec{f}}{N-1}}}$ and $\vhat{\mat{F}}=\trsps{\seqprn{\app{\vhat{\vec{f}}}{\lambda_0}\:\cdots\:\app{\vhat{\vec{f}}}{\lambda_{N-1}}}}$, the $p$-variate GFT applied to $\vec{f}$ is expressed as $\vhat{\mat{F}}=\adjoint{\mat{U}}\mat{F}$, where $\mat{U}$ is an $N\times N$ unitary matrix with $\seqprn{i,k}$-th element $\app{u_k}{i}$.
Then, its inverse is given by $\mat{F}=\mat{U}\vhat{\mat{F}}$.

Multivariate signals on a graph can be regarded as univariate signals on a product graph of the graph and an edgeless graph.
Let $\bar{K}_p$ be an edgeless graph with vertex set $\setprn{0,\ldots,p-1}$.
A $p$-variate signal $\funcdoms{\vec{f}}{V}{\setR^p}$ is equated to a univariate signal $\funcdoms{g}{V\times\setprn{0,\ldots,p-1}}{\setR}$ on $G\cprod\bar{K}_p$ satisfying that $\app{g}{i,a}$ is the $a$-th variable of $\app{\vec{f}}{i}$ for any $i=0,\ldots,N-1$ and any $a=0,\ldots,p-1$.
The product graph $G\cprod\bar{K}_p$ is $p$ independent copies of $G$.

Then, the multivariate GFT of multivariate signals is equal to the 2-D GFT of the corresponding 2-D univariate signals.
The graph Laplacian of $\bar{K}_p$ is a zero matrix so that one of its eigenvectors are the standard basis in $\setC^p$.
Therefore, by putting $\mat{G}=\rawparen{\app{g}{i,a}}_{i,a}$, the 2-D spectrum of $g$ obtained by the 2-D GFT has a matrix representation $\vhat{\mat{G}}=\adjoint{\mat{U}}\mat{G}\matid[p]=\adjoint{\mat{U}}\mat{G}$.
Considering $\mat{G}=\trsps{\seqprn{\app{\vec{f}}{0}\:\cdots\:\app{\vec{f}}{N-1}}}$, the multivariate GFT and the 2-D GFT are equal.
Note that because the graph Laplacian of $\bar{K}_p$ only has an eigenvalue of zero, the 2-D spectrum of $g$ is multi-valued at all frequencies.
Matrix representations of GFTs conceal their multi-valuedness under non-distinct frequencies.

\subsection{Stationarity of multivariate graph signals}\label{subsec:multivariate_stationarity}
This subsection will propose the stationarity of multivariate random signals on graphs, which extends existing stationarities for univariate graph signals in~\citep{Segarra2017,Perraudin2017}.
For a $p$-variate random graph signal $\vec{x}=\rawparen{x_a}_{a=0,\ldots,p-1}$ on $G$, denote an $N\times p$ matrix whose $\seqprn{i,a}$-element is $\app{x_a}{i}$ by $\mat{X}$.
Let $\mat{Z}$ be an $N\times p$ white noise matrix.

\begin{definition}[Stationarity of multivariate graph signals]\label{def:multivariate_stationarity}
  A zero-mean $p$-variate random signal $\vec{x}$ on a graph $G$ is said to be stationary if some matrices $\mat{H}_0,\ldots,\mat{H}_{N-1}\in\setC^{p\times p}$ satisfy
  \begin{align}\label{eq:multivariate_stationarity}
    \mat{X}=\sum_{s=0}^{N-1}\mat{L}^s\mat{Z}\mat{H}_s.
  \end{align}
\end{definition}

Considering multivariate graph signals as 2-D univariate graph signals as mentioned above, the stationarity of multivariate graph signals is equal to the directional stationarity of the corresponding 2-D graph signals.
\Cref{th:directional_simdiag_to_spectral_uncorrelatedness,th:directional_stationarity_nscond} can also be translated to the following corollaries about multivariate stationary signals.
Let $\Cov{x_a,x_b}$ be an $N\times N$ matrix whose $\seqprn{i,j}$-th element is $\Cov{\app{x_a}{i},\app{x_b}{j}}$ for $i,j=0,\ldots,N-1$ and $\vhat{x}_a$ the $a$-th variable of the spectrum $\vhat{\vec{x}}$.

\begin{corollary}\label{cor:multivariate_simdiag_to_spectral_uncorrelatedness}
  For a zero-mean $p$-variate random signal $\vec{x}$ on a graph $G$, the following two conditions are equivalent:
  \begin{enumerate}
    \item matrices $\Cov{x_a,x_b}$ and $\mat{L}$ are simultaneously diagonalizable for any $a,b=0,\ldots,p-1$, and
    \item a covariance between spectra $\app{\vhat{x}_a}{\lambda_k}$ and $\app{\vhat{x}_b}{\lambda_l}$ is equal to zero at $k\neq l$.
  \end{enumerate}
\end{corollary}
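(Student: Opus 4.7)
The plan is to derive this corollary by specializing Part (A) of \cref{th:directional_simdiag_to_spectral_uncorrelatedness} to the 2-D univariate signal that corresponds to the multivariate signal $\vec{x}$, using the identification described just before the corollary.

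First I set $G_1=G$ and $G_2=\bar{K}_p$, so that $\mat{L}_2$ is the zero matrix and $N_2=p$. I choose the standard basis $\setprn{e_0,\ldots,e_{p-1}}$ as an orthonormal eigenbasis of $\mat{L}_2$, which gives $\app{\uid{u_a}{2}}{i_2}=\app{\delta}{a,i_2}$. Identifying $\vec{x}$ with the 2-D signal $x$ on $G\cprod\bar{K}_p$ defined by $\app{x}{i,a}=\app{x_a}{i}$, the matrix form of the 2-D GFT in \cref{eq:2DGFT_mat} collapses to $\vhat{\mat{X}}=\adjoint{\mat{U}}\mat{X}\matid[p]=\adjoint{\mat{U}}\mat{X}$, i.e., the multivariate GFT, so that $\app{\vhat{x}}{\uid{\lambda_k}{1},\uid{\lambda_a}{2}}=\app{\vhat{x_a}}{\lambda_k}$. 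The 2-D index pair $(k_1,k_2)$ thus corresponds bijectively to the pair (graph-frequency index, variable index) on the multivariate side.

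Under this correspondence, the 2-D cross-covariance $\Cov{\app{x}{\placeholder,i_2},\app{x}{\placeholder,j_2}}$ equals $\Cov{x_{i_2},x_{j_2}}$, so condition 1 of Part (A) translates directly into condition 1 of the corollary; likewise, the 2-D spectral covariance equals $\Cov{\app{\vhat{x_{k_2}}}{\lambda_{k_1}},\app{\vhat{x_{l_2}}}{\lambda_{l_1}}}$, so requiring it to vanish whenever $k_1\neq l_1$ for every $k_2,l_2$ is exactly condition 2 of the corollary. The only delicate point is the eigenbasis choice: because $\mat{L}_2$ is the zero matrix, any orthonormal basis of $\setC^p$ qualifies, and the canonical identification between multivariate components and 2-D fibers specifically requires the standard basis. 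Once this bookkeeping is fixed, invoking the equivalence in Part (A) of \cref{th:directional_simdiag_to_spectral_uncorrelatedness} yields the corollary with no further computation.
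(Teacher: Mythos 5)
Your proposal is correct and is exactly the route the paper intends: it presents \cref{cor:multivariate_simdiag_to_spectral_uncorrelatedness} as a direct translation of \cref{th:directional_simdiag_to_spectral_uncorrelatedness}(A) under the identification $G_1=G$, $G_2=\bar{K}_p$ with the standard basis as eigenbasis of the zero Laplacian, so that $\Cov{\app{x}{\placeholder,a},\app{x}{\placeholder,b}}=\Cov{x_a,x_b}$ and the 2-D spectral index $(k_1,k_2)$ becomes the (frequency, variable) pair. Your remark on the eigenbasis choice matches the paper's own observation that the standard basis serves as eigenvectors of the zero matrix, so no gap remains.
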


\begin{corollary}\label{cor:multivariate_stationarity_nscond}
  If a zero-mean $p$-variate random signal $\vec{x}$ on $G$ is stationary, matrices $\Cov{x_a,x_b}$ and $\mat{L}$ are simultaneously diagonalizable for any $a,b=0,\ldots,p-1$.
  When eigenvalues of $\mat{L}$ are distinct, the converse holds.
\end{corollary}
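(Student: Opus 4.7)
The plan is to reduce this corollary to \cref{th:directional_stationarity_nscond}(A) by exploiting the correspondence, established at the end of the previous subsection, between $p$-variate signals on $G$ and univariate signals on the product graph $G\cprod\bar{K}_p$. Specifically, I would identify the variable index $a$ with the second vertex coordinate $i_2$, so that the $N\times p$ signal matrix $\mat{X}$ of the $p$-variate process serves simultaneously as the matrix representation of the corresponding 2-D univariate process on $G\cprod\bar{K}_p$. Under this identification, the $N\times N$ covariance matrix $\Cov{x_a,x_b}$ coincides with $\Cov{\app{x}{\placeholder,i_2},\app{x}{\placeholder,j_2}}$ for $i_2=a$ and $j_2=b$.

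Next, I would verify that \cref{def:multivariate_stationarity} matches the $G_1$-stationarity condition in \cref{eq:G1_stationarity} under the choices $G_1=G$ (so $\mat{L}_1=\mat{L}$ and $N_1=N$) and $G_2=\bar{K}_p$ (so $N_2=p$). The two defining equations $\mat{X}=\sum_s\mat{L}^s\mat{Z}\mat{H}_s$ and $\mat{X}=\sum_{s_1=0}^{N_1-1}\mat{L}_1^{s_1}\mat{Z}\mat{H}_{s_1}$ agree term by term. The key observation is that $\mat{L}_2$ does not appear in the $G_1$-stationarity equation at all, so the fact that the Laplacian of $\bar{K}_p$ is the zero matrix (and hence has grossly non-distinct spectrum) causes no difficulty. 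This is also why \cref{th:directional_stationarity_nscond}(A) is the appropriate tool to invoke rather than \cref{th:fgwise_stationarity_nscond}: the converse of (A) requires distinctness of the eigenvalues of $\mat{L}_1$ only, not of $\mat{L}_2$.

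With these identifications in place, both implications of the corollary follow directly from \cref{th:directional_stationarity_nscond}(A): its forward direction yields the unconditional implication, and its converse, under the hypothesis that the eigenvalues of $\mat{L}_1=\mat{L}$ are distinct, yields the conditional converse stated in the corollary. I do not anticipate any substantive obstacle beyond careful bookkeeping in translating between the multivariate notation ($x_a$, $\Cov{x_a,x_b}$) of the corollary and the bivariate notation of \cref{th:directional_stationarity_nscond}; the spectral arguments themselves have already been carried out in the proof of that theorem.
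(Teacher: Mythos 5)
Your proposal is correct and follows exactly the paper's intended route: the paper derives \cref{cor:multivariate_stationarity_nscond} by identifying the $p$-variate signal on $G$ with a univariate signal on $G\cprod\bar{K}_p$ and observing that \cref{def:multivariate_stationarity} is precisely $G_1$-stationarity with $G_1=G$, $G_2=\bar{K}_p$, so that \cref{th:directional_stationarity_nscond}(A) applies verbatim. Your remark that only the distinctness of the eigenvalues of $\mat{L}_1=\mat{L}$ is needed (so the zero Laplacian of $\bar{K}_p$ causes no trouble) is exactly the point that makes the translation work.
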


The proposed multivariate stationarity is consistent with the stationarity of multivariate time signals.
Suppose a zero-mean $p$-variate random signal $\vec{x}$ on an $N$-cycle graph is stationary, and then \cref{cor:multivariate_stationarity_nscond} indicates that the covariance $\Cov{x_a,x_b}$ and the graph Laplacian $\mat{L}$ are simultaneously diagonalizable for any $a,b=0,\ldots,p-1$.
Because $\mat{L}$ is diagonalizable with the discrete Fourier matrix, the matrix $\Cov{x_a,x_b}$ should be circular, i.e., satisfies $\Cov{\app{x_a}{i},\app{x_b}{j}}=\Cov{\app{x_a}{i-j},\app{x_b}{0}}$ for any $i,j=0,\ldots,N-1$ and any $a,b=0,\ldots,p-1$.
Note that $\app{x_a}{i-j}$ is $x_a$ applied to the remainder of $i-j$ modulo $N$.
Therefore, regarding the graph signal $\vec{x}$ as a $p$-variate time signal of period $N$, its autocovariance matrix $\rawparen{\Cov{\app{x_a}{i},\app{x_b}{j}}}_{a,b=0,\ldots,p-1}$ is shift-invariant, which means that the signal is stationary as a temporal signal as well.

\section{Conclusion}\label{sec:conclusion}
This study has proposed an MGFT that retains the dimensional information of multi-dimensional graph signals.
The proposed transform has provided multi-dimensional spectral filtering, multi-dimensional optimization filtering, factor-graph-wise stationarity, and directional stationarity.
By considering multivariate graph signals as 2-D univariate graph signals, this study has proposed the multivariate GFT and stationarity.

Note that the proposed multi-dimensional GSP methodologies are not applicable to signals on a product graph with unknown factor graphs, or to signals on a nearly product graph (even small perturbation destroys Cartesian product structure~\citep{Imrich1996}).
Preliminary decomposition or approximation of graphs with product graphs~\citep{Imrich2008,Hagauer1999,Hellmuth2013} may solve the problem.

Further work is needed to build upon the findings of this study.
One future work is to clarify whether the proposed stationarities exist in practical graph signals or not.
For 1-D univariate graph signals, numerical experiments indicated that the well-known USPS dataset was almost stationary~\citep{Perraudin2017}.
Further numerical experiments may show that a practical multi-dimensional graph signal is almost factor-wise or directional stationary, and that a practical multivariate graph signal is almost stationary.
Another future work is to prove the importance of stationarities for graph signals.
For 1-D univariate graph signals, the stationarity enables us to estimate the power spectral densities of the signals~\citep{Segarra2017} and to construct Wiener filters on graphs~\citep{Perraudin2017}.
We expect that the assumption of the proposed stationarities will provide new GSP methodologies.

\appendix
Here are proofs of theorems in \cref{sec:stationarity}.
For $n=1,2$, let $G_n$ be an undirected weighted graph with vertex set $V_n=\setprn{0,\ldots,N_n-1}$, and suppose that its graph Laplacian $\mat{L}_n$ can be decomposed as $\mat{L}_n=\mat{U}_n\mat{\Lambda}_n\adjoint{\mat{U}_n}$ with a diagonal matrix $\mat{\Lambda}_n$ whose diagonal elements are $\uid{\lambda_0}{n},\ldots,\uid{\lambda_{N-1}}{n}$ and a unitary matrix $\mat{U}_n=\rawparen{\app{\uid{u_k}{n}}{i}}_{i,k}=\rawparen{\uid{u_{ik}}{n}}_{i,k}$.
For a graph signal $\funcdoms{x}{V_1\times V_2}{\setR}$ on $G_1\cprod G_2$, denote $\app{x}{i_1,i_2}$ by $x_{i_1i_2}$ and define an $N_1\times N_2$ matrix $\mat{X}=\rawparen{x_{i_1i_2}}_{i_1,i_2}$.
For the spectrum $\vhat{x}$ of $x$ obtained by the 2-D GFT, denote $\app{\vhat{x}}{\uid{\lambda_{k_1}}{1},\uid{\lambda_{k_2}}{2}}$ by $\vhat{x}_{k_1k_2}$ and define an $N_1\times N_2$ matrix $\vhat{\mat{X}}=\rawparen{\vhat{x}_{k_1k_2}}_{k_1,k_2}$.
For a white noise function $\funcdoms{z}{V_1\times V_2}{\setR}$ on $G_1\cprod G_2$ and its 2-D spectrum $\vhat{z}$, define $z_{i_1i_2}$, $\mat{Z}$, $\vhat{z}_{k_1k_2}$, and $\vhat{\mat{Z}}$ in the same manner.
For any variable with two indices like $x_{i_1i_2}$, denote the $i_1$-th row vector $\rawparen{x_{i_10}\:\cdots\:x_{i_1\paren{N_2-1}}}$ by $\vec{x}_{i_1\nid}$ and the $i_2$-th column vector $\rawparen{x_{0i_2}\:\cdots\:x_{\paren{N_1-1}i_2}}^\top$ by $\vec{x}_{\nid i_2}$.
Denote a diagonal matrix whose diagonal elements are $a_1,\ldots,a_n$ by $\app{\diag}{a_1,\ldots,a_n}$, and with $\vec{a}=\rawparen{a_i}_{i=1,\ldots,n}$, denote the diagonal matrix by $\diag\vec{a}$ also.

\subsection{Proofs of \cref{th:fgwise_simdiag_to_spectral_uncorrelatedness,th:fgwise_stationarity_nscond} about factor-graph-wise stationarity}
First we prove \cref{th:fgwise_simdiag_to_spectral_uncorrelatedness}.

\begin{proof}[Proof of \cref{th:fgwise_simdiag_to_spectral_uncorrelatedness}]
  We show that 1) and 2) are equivalent, and that 2) and 3) are equivalent.

  [from 1) to 2)]
  Suppose condition 1).
  Due to the first simultaneous diagonalizability, some vector $\vec{\alpha}_{i_2j_2}$ satisfies
  \begin{align}
    \Cov{\app{x}{\placeholder,i_2},\app{x}{\placeholder,j_2}}
    =\mat{U}_1\paren{\diag\vec{\alpha}_{i_2j_2}}\adjoint{\mat{U}_1}
  \end{align}
  for $i_2,j_2=0,\ldots,N_2-1$, and due to the second, some vector $\vec{\beta}_{i_1j_1}$ satisfies
  \begin{align}
    \Cov{\app{x}{i_1,\placeholder},\app{x}{j_1,\placeholder}}
    =\mat{U}_2\paren{\diag\vec{\beta}_{i_1j_1}}\adjoint{\mat{U}_2}
  \end{align}
  for $i_1,j_1=0,\ldots,N_1-1$.
  Denote the $k_1$-th element of $\vec{\alpha}_{i_2j_2}$ by $\alpha_{i_2j_2,k_1}$ and the $k_2$-th element of $\vec{\beta}_{i_1j_1}$ by $\beta_{i_1j_1,k_2}$, and put
  \begin{align}
    \gamma_{k_1k_2}=\sum_{i_1}\sum_{j_1}\wideconj{\uid{u_{i_1k_1}}{1}}\uid{u_{j_1k_1}}{1}\beta_{i_1j_1,k_2}
  \end{align}
  for any $k_1=0,\ldots,N_1-1$ and $k_2=0,\ldots,N_2-1$.
  Because a covariance $\Cov{\app{x}{i_1,i_2}, \app{x}{j_1,j_2}}$ has two expressions like $\sum_{k_1}\uid{u_{i_1k_1}}{1}\wideconj{\uid{u_{j_1k_1}}{1}}\alpha_{i_2j_2,k_1}$ and $\sum_{k_2}\uid{u_{i_2k_2}}{2}\wideconj{\uid{u_{j_2k_2}}{2}}\beta_{i_1j_1,k_2}$, an equation
  \begin{align}
    \alpha_{i_2j_2,k_1}
    &=\sum_{i_1}\sum_{j_1}\wideconj{\uid{u_{i_1k_1}}{1}}\uid{u_{j_1k_1}}{1}\paren{\sum_{l_1}\uid{u_{i_1l_1}}{1}\wideconj{\uid{u_{j_1l_1}}{1}}\alpha_{i_2j_2,l_1}}
    \\
    &=\sum_{i_1}\sum_{j_1}\wideconj{\uid{u_{i_1k_1}}{1}}\uid{u_{j_1k_1}}{1}\paren{\sum_{k_2}\uid{u_{i_2k_2}}{2}\wideconj{\uid{u_{j_2k_2}}{2}}\beta_{i_1j_1,k_2}}
    \\
    &=\sum_{k_2}\uid{u_{i_2k_2}}{2}\wideconj{\uid{u_{j_2k_2}}{2}}\gamma_{k_1k_2}
  \end{align}
  holds.
  Therefore, a spectral covariance of $x$ is given by
  \begin{align}
    &\Cov{\vhat{\vec{x}}_{\nid k_2},\vhat{\vec{x}}_{\nid l_2}}
    =\EV{\adjoint{\mat{U}_1}\mat{X}\wideconj{\uid{\vec{u}_{\nid k_2}}{2}}\vtrsps{\paren{\vuid{\vec{u}_{\nid l_2}}{2}}}\vtrsps{\vec{X}}\mat{U}_1}
    \\
    &=\adjoint{\mat{U}_1}\paren{\sum_{i_2}\sum_{j_2}\wideconj{\uid{u_{i_2k_2}}{2}}\uid{u_{j_2l_2}}{2}\EV{\vec{x}_{\nid i_2}\trsps{\vec{x}_{\nid j_2}}}}\mat{U}_1
    \\
    &=\adjoint{\mat{U}_1}\paren{\sum_{i_2}\sum_{j_2}\wideconj{\uid{u_{i_2k_2}}{2}}\uid{u_{j_2l_2}}{2}\mat{U}_1\paren{\diag\vec{\alpha}_{i_2j_2}}\adjoint{\mat{U}_1}}\mat{U}_1
    \\
    &=\sum_{i_2}\sum_{j_2}\wideconj{\uid{u_{i_2k_2}}{2}}\uid{u_{j_2l_2}}{2}\diag\vec{\alpha}_{i_2j_2}
    =\delta_{k_2l_2}\diag\vec{\gamma}_{\nid k_2}
  \end{align}
  and the condition 2) holds.

  [from 2) to 1)]
  Suppose condition 2).
  Then some $\gamma_{00},\ldots,\gamma_{\paren{N_1-1}\paren{N_2-1}}$ exist and satisfy
  \begin{align}
    \Cov{\vhat{x}_{k_1k_2},\vhat{x}_{l_1l_2}}
    =\delta_{k_1l_1}\delta_{k_2l_2}\gamma_{k_1k_2},\label{eq:spectral_uncorrelatedness}
  \end{align}
  and by putting $\alpha_{i_2j_2,k_1}=\sum_{k_2}\uid{u_{i_2k_2}}{2}\wideconj{\uid{u_{j_2k_2}}{2}}\gamma_{k_1k_2}$, $\mat{U}_1$ diagonalizes $\Cov{\app{x}{\placeholder,i_2},\app{x}{\placeholder,j_2}}$ into $\app{\diag}{\alpha_{i_2j_2,0},\ldots,\alpha_{i_2j_2,N_1-1}}$;
  by putting $\beta_{i_1j_1,k_2}=\sum_{k_1}\uid{u_{i_1k_1}}{1}\wideconj{\uid{u_{j_1k_1}}{1}}\gamma_{k_1k_2}$, $\mat{U}_2$ diagonalizes $\Cov{\app{x}{i_1,\placeholder},\app{x}{j_1,\placeholder}}$ into $\app{\diag}{\beta_{i_1j_1,0},\ldots,\beta_{i_1j_1,N_2-1}}$.
  Now the condition 1) holds.

  [2) and 3) are equal]
  The two conditions are obviously equivalent because of a relation of
  \begin{align}
    \Cov{\vhat{x}_{k_1k_2},\vhat{x}_{l_1l_2}}
    =
    \adjoint{\paren{\vuid{\vec{u}_{\nid k_1}}{1}\kprod\vuid{\vec{u}_{\nid k_2}}{2}}}
    \Cov{x}
    \paren{\vuid{\vec{u}_{\nid l_1}}{1}\kprod\vuid{\vec{u}_{\nid l_2}}{2}}.
  \end{align}
\end{proof}

Second, we prove \cref{th:fgwise_stationarity_nscond}.
Prepare a spectral representation of a 2-D graph signal filter in \cref{eq:fgwise_stationarity} like
\begin{align}
  \vhat{\mat{X}}
  &=\sum_{s_1=0}^{N_1-1}\sum_{s_2=0}^{N_2-1}h_{s_1s_2}\mat{\Lambda}_1^{s_1}\vhat{\mat{Z}}\mat{\Lambda}_2^{s_2}
  =\paren{\mat{\Psi}_1\mat{H}\trsps{\mat{\Psi}_2}}\hprod\vhat{\mat{Z}}\label{eq:fgwise_stationarity_spec},
\end{align}
where $\mat{H}$ is an $N_1\times N_2$ matrix with $\seqprn{s_1,s_2}$-th element $h_{s_1s_2}$ for $s_1=0,\ldots,N_1-1$ and $s_2=0,\ldots,N_2-1$, and $\mat{\Psi}_n$ is a Vandermonde matrix given by
\begin{align}
  \mat{\Psi}_n
  =
  \begin{pmatrix}
    \paren{\uid{\lambda_0}{n}}^0&\cdots&\paren{\uid{\lambda_0}{n}}^{N_n-1}
    \\
    \vdots&\ddots&\vdots
    \\
    \paren{\uid{\lambda_{N_n-1}}{n}}^0&\cdots&\paren{\uid{\lambda_{N_n-1}}{n}}^{N_n-1}
  \end{pmatrix}
\end{align}
for $n=1,2$.
The operator $\hprod$ indicates an element-wise product.

\begin{lemma}\label{lem:fgwise_stationary_spcov}
  When a signal $x$ on the graph $G_1\cprod G_2$ is factor-graph-wise stationary and represented as \cref{eq:fgwise_stationarity}, its spectral covariance is given by
  \begin{align}
    \Cov{\vhat{x}_{k_1k_2},\vhat{x}_{l_1l_2}}
    =\delta_{k_1l_1}\delta_{k_2l_2}\absprn{\vtilde{h}_{k_1k_2}}^2
  \end{align}
  for any $k_1,l_1=0,\ldots,N_1-1$ and $k_2,l_2=0,\ldots,N_2-1$, where $\vtilde{h}_{k_1k_2}$ is the $\seqprn{k_1,k_2}$-th element of the matrix $\mat{\Psi}_1\mat{H}\trsps{\mat{\Psi}_2}$.
\end{lemma}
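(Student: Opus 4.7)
The plan is to exploit the spectral-domain representation already established in \cref{eq:fgwise_stationarity_spec}, which recasts the factor-graph-wise filtering as an element-wise (Hadamard) product. By construction, the $\seqprn{k_1,k_2}$-th entry of $\mat{\Psi}_1\mat{H}\trsps{\mat{\Psi}_2}$ is $\vtilde{h}_{k_1k_2}=\sum_{s_1,s_2}h_{s_1s_2}\paren{\uid{\lambda_{k_1}}{1}}^{s_1}\paren{\uid{\lambda_{k_2}}{2}}^{s_2}$, so \cref{eq:fgwise_stationarity_spec} reduces to the scalar identity $\vhat{x}_{k_1k_2}=\vtilde{h}_{k_1k_2}\vhat{z}_{k_1k_2}$. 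Because $\vtilde{h}_{k_1k_2}$ is deterministic, pulling the scalars out of the covariance gives
\begin{align}
  \Cov{\vhat{x}_{k_1k_2},\vhat{x}_{l_1l_2}}
  =\vtilde{h}_{k_1k_2}\wideconj{\vtilde{h}_{l_1l_2}}\Cov{\vhat{z}_{k_1k_2},\vhat{z}_{l_1l_2}},
\end{align}
and the problem reduces to computing the spectral covariance of the white noise itself.

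For the second step, I would show that the 2-D GFT preserves whiteness, namely $\Cov{\vhat{z}_{k_1k_2},\vhat{z}_{l_1l_2}}=\app{\delta}{k_1,l_1}\app{\delta}{k_2,l_2}$. I would expand $\vhat{z}_{k_1k_2}=\sum_{i_1}\sum_{i_2}\wideconj{\uid{u_{i_1k_1}}{1}}\wideconj{\uid{u_{i_2k_2}}{2}}z_{i_1i_2}$ directly from $\vhat{\mat{Z}}=\adjoint{\mat{U}_1}\mat{Z}\wideconj{\mat{U}_2}$, form the covariance using $\EV{z_{i_1i_2}z_{j_1j_2}}=\app{\delta}{i_1,j_1}\app{\delta}{i_2,j_2}$, and split the resulting quadruple sum into two independent 1-D sums. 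Each collapses to a Kronecker delta by the orthonormality $\sum_i\wideconj{\uid{u_{ik}}{n}}\uid{u_{il}}{n}=\app{\delta}{k,l}$ of the Laplacian eigenfunctions; equivalently, the map $\mat{Z}\mapsto\adjoint{\mat{U}_1}\mat{Z}\wideconj{\mat{U}_2}$ acts as a unitary transformation on $\Vec\mat{Z}$, and unitary maps preserve the identity covariance of a white vector.

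Combining the two steps, the Kronecker deltas force $\seqprn{k_1,k_2}=\seqprn{l_1,l_2}$, and the scalar prefactor then becomes $\vtilde{h}_{k_1k_2}\wideconj{\vtilde{h}_{k_1k_2}}=\absprn{\vtilde{h}_{k_1k_2}}^2$, yielding the claimed identity. I do not expect a genuine obstacle here: once \cref{eq:fgwise_stationarity_spec} is in hand, the argument is bookkeeping. The only points deserving care are the placement of complex conjugates, since the eigenvectors $\uid{u_k}{n}$ may be complex even though $\mat{L}_n$ and $z$ are real, and the appearance of $\wideconj{\mat{U}_2}$ rather than $\mat{U}_2$ in the 2-D GFT of $\mat{Z}$, which is precisely what allows each $\mat{L}_n^{s_n}$ to commute through to $\mat{\Lambda}_n^{s_n}$ on the appropriate side of $\vhat{\mat{Z}}$.
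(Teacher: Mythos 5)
Your proposal is correct and follows essentially the same route as the paper: both first establish $\Cov{\vhat{z}_{k_1k_2},\vhat{z}_{l_1l_2}}=\delta_{k_1l_1}\delta_{k_2l_2}$ by expanding the quadruple sum and invoking orthonormality of the eigenfunctions, then read off the claim from the scalar form $\vhat{x}_{k_1k_2}=\vtilde{h}_{k_1k_2}\vhat{z}_{k_1k_2}$ of \cref{eq:fgwise_stationarity_spec}. Your added remarks on conjugate placement and the unitarity viewpoint are sound but do not change the argument.
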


\begin{proof}
  The spectral covariance of a white signal $z$ on $G_1\cprod G_2$ is given by
  \begin{align}
    &\Cov{\vhat{z}_{k_1k_2},\vhat{z}_{l_1l_2}}
    \\
    &=\sum_{i_1}\sum_{i_2}\sum_{j_1}\sum_{j_2}\wideconj{\uid{u_{i_1k_1}}{1}\uid{u_{i_2k_2}}{2}}\uid{u_{j_1l_1}}{1}\uid{u_{j_2l_2}}{2}\EV{z_{i_1i_2}z_{j_1j_2}}
    \\
    &=\sum_{i_1}\sum_{i_2}\sum_{j_1}\sum_{j_2}\wideconj{\uid{u_{i_1k_1}}{1}\uid{u_{i_2k_2}}{2}}\uid{u_{j_1l_1}}{1}\uid{u_{j_2l_2}}{2}\delta_{i_1j_1}\delta_{i_2j_2}
    \\
    &=\delta_{k_1l_1}\delta_{k_2l_2}.
  \end{align}
  Therefore, according to \cref{eq:fgwise_stationarity_spec}, the spectral covariance of $x$ is given by
  \begin{align}
    \Cov{\vhat{x}_{k_1k_2},\vhat{x}_{l_1l_2}}
    &=\Cov{\vtilde{h}_{k_1k_2}\vhat{z}_{k_1k_2},\vtilde{h}_{l_1l_2}\vhat{z}_{l_1l_2}}
    \\
    &=\delta_{k_1l_1}\delta_{k_2l_2}\absprn{\vtilde{h}_{k_1k_2}}^2.
  \end{align}
\end{proof}

\begin{proof}[Proof of \cref{th:fgwise_stationarity_nscond}]
  [from factor-graph-wise stationarity to simultaneous diagonalizability]
  Assuming signal $x$ is factor-graph-wise stationary, the 2-D spectrum $\vhat{x}$ is uncorrelated between different frequencies due to \cref{lem:fgwise_stationary_spcov}.
  Then, according to \cref{th:fgwise_simdiag_to_spectral_uncorrelatedness}, a matrix $\Cov{\app{x}{\placeholder,i_2},\app{x}{\placeholder,j_2}}$ is simultaneously diagonalizable with $\mat{L}_1$ for any $i_2,j_2=0,\ldots,N_2-1$ and $\Cov{\app{x}{i_1,\placeholder},\app{x}{j_1,\placeholder}}$ is simultaneously diagonalizable with $\mat{L}_2$ for any $i_1,j_1=0,\ldots,N_1-1$.

  [from simultaneous diagonalizability to factor-graph-wise stationarity]
  Suppose that a zero-mean signal $x$ on the graph satisfies the two simultaneous diagonalizabilities such that the eigenvalues $\uid{\lambda_0}{1},\ldots,\uid{\lambda_{N_1-1}}{1}$ are distinct, and that the eigenvalues $\uid{\lambda_0}{2},\ldots,\uid{\lambda_{N_2-1}}{2}$ are distinct.
  Using $\gamma_{00},\ldots,\gamma_{\paren{N_1-1}\paren{N_2-1}}$ satisfying \cref{eq:spectral_uncorrelatedness} from \cref{th:fgwise_simdiag_to_spectral_uncorrelatedness}, define a matrix
  \begin{align}
    \mat{H}=\mat{\Psi}_1^{-1}\sqrt{\mat{\Gamma}}\mat{\Psi}_2^{-1}
  \end{align}
  where $\sqrt{\mat{\Gamma}}$ is an $N_1\times N_2$ matrix with $\seqprn{k_1,k_2}$-th element $\sqrt{\gamma_{k_1k_1}}$.
  The matrix $\mat{H}$ exists because $\gamma_{k_1k_2}=\Var{\vhat{x}_{k_1k_2}}$ is nonnegative and because Vandermonde matrices $\mat{\Psi}_1$ and $\mat{\Psi}_2$ are invertible due to the distinctness of the eigenvalues.
  Then, the signal $x$ can be represented like \cref{eq:fgwise_stationarity} where $h_{s_1s_2}$ is the $\seqprn{s_1,s_2}$-th element of the matrix $\mat{H}$ for $s_1=0,\ldots,N_1-1$ and $s_2=0,\ldots,N_2-1$, and is factor-graph-wise stationary.
\end{proof}

\subsection{Proofs of \cref{th:directional_simdiag_to_spectral_uncorrelatedness,th:directional_stationarity_nscond} about directional stationarity}
First we show that \cref{th:directional_simdiag_to_spectral_uncorrelatedness}(A) holds.
The part (B) can be proved in the same way.

\begin{proof}[Proof of \cref{th:directional_simdiag_to_spectral_uncorrelatedness}(A)]
  [from 1) to 2)]
  Suppose the condition 1).
  Due to the simultaneous diagonalizability, some vector $\vec{\alpha}_{i_2j_2}$ satisfies
  \begin{align}
    \Cov{\app{x}{\placeholder,i_2},\app{x}{\placeholder,j_2}}
    =\mat{U}_1\paren{\diag\vec{\alpha}_{i_2j_2}}\adjoint{\mat{U}_1}
  \end{align}
  for $i_2,j_2=0,\ldots,N_2-1$.
  Denote the $k_1$-th element of $\vec{\alpha}_{i_2j_2}$ by $\alpha_{i_2j_2,k_1}$, where the index $k_1$ starts from zero.
  Then a spectral covariance of $x$ is given by
  \begin{align}
    &\Cov{\vhat{\vec{x}}_{\nid k_2},\vhat{\vec{x}}_{\nid l_2}}
    =\EV{\adjoint{\mat{U}_1}\mat{X}\wideconj{\uid{\vec{u}_{\nid k_2}}{2}}\vtrsps{\paren{\vuid{\vec{u}_{\nid l_2}}{2}}}\vtrsps{\vec{X}}\mat{U}_1}
    \\
    &=\adjoint{\mat{U}_1}\paren{\sum_{i_2}\sum_{j_2}\wideconj{\uid{u_{i_2k_2}}{2}}\uid{u_{j_2l_2}}{2}\EV{\vec{x}_{\nid i_2}\trsps{\vec{x}_{\nid j_2}}}}\mat{U}_1
    \\
    &=\adjoint{\mat{U}_1}\paren{\sum_{i_2}\sum_{j_2}\wideconj{\uid{u_{i_2k_2}}{2}}\uid{u_{j_2l_2}}{2}\mat{U}_1\paren{\diag\vec{\alpha}_{i_2j_2}}\adjoint{\mat{U}_1}}\mat{U}_1
    \\
    &=\app{\diag}{\sum_{i_2}\sum_{j_2}\wideconj{\uid{u_{i_2k_2}}{2}}\uid{u_{j_2l_2}}{2}\vec{\alpha}_{i_2j_2}}
  \end{align}
  and condition 2) holds.

  [from 2) to 1)]
  Suppose condition 2).
  Then, some vectors $\vec{\beta}_{00},\ldots,\vec{\beta}_{\paren{N_2-1}\paren{N_2-1}}$ exist and satisfy
  \begin{align}
    \Cov{\vhat{\vec{x}}_{\nid k_2},\vhat{\vec{x}}_{\nid l_2}}
    =\diag\vec{\beta}_{k_2l_2},\label{eq:directional_spectral_uncorrelatedness}
  \end{align}
  and by putting $\vec{\alpha}_{i_2j_2}=\sum_{k_2}\sum_{l_2}\uid{u_{i_2k_2}}{2}\wideconj{\uid{u_{j_2l_2}}{2}}\vec{\beta}_{k_2l_2}$, $\mat{U}_1$ diagonalizes $\Cov{\app{x}{\placeholder,i_2},\app{x}{\placeholder,j_2}}$ into $\diag\vec{\alpha}_{i_2j_2}$.
  Now condition 1) holds.
\end{proof}

Second, we show a proof of \cref{th:directional_stationarity_nscond}(A), which implies that of the part (B).
Regarding a matrix $\mat{H}_{s_1}$ in \cref{eq:G1_stationarity}, let $h_{s_1,i_2j_2}$ be the $\seqprn{i_2,j_2}$-th element of $\mat{H}_{s_1}$ and define a vector $\vec{h}_{i_2j_2}$ with $s_1$-th element $h_{s_1,i_2j_2}$.
Note that all the indices $i_2$, $j_2$, and $s_1$ start from zeros.
By putting $\vtilde{x}_{k_1i_2}=\sum_{i_1}x_{i_1i_2}\wideconj{\uid{u_{i_1k_1}}{1}}$ and $\vtilde{z}_{k_1i_2}=\sum_{i_1}z_{i_1i_2}\wideconj{\uid{u_{i_1k_1}}{1}}$, a filter in the equation is represented as
\begin{align}
  \vtilde{x}_{k_1i_2}
  =\sum_{s_1}\paren{\vuid{\lambda_{k_1}}{1}}^{s_1}\sum_{\sigma_2}\vtilde{z}_{k_1\sigma_2}h_{s_1,\sigma_2i_2}
  =\sum_{\sigma_2}\vtilde{h}_{k_1,\sigma_2i_2}\vtilde{z}_{k_1\sigma_2}\label{eq:G1_stationarity_spec}
\end{align}
where $\vtilde{h}_{k_1,\sigma_2i_2}$ is the $k_1$-th element of a vector $\vtilde{\vec{h}}_{\sigma_2i_2}=\mat{\Psi}_1\vec{h}_{\sigma_2i_2}$ for $k_1=0,\ldots,N_1-1$.

\begin{lemma}\label{lem:G1_stationary_spcov}
  When a signal $x$ on the graph $G_1\cprod G_2$ is $G_1$-stationary and represented as \cref{eq:G1_stationarity}, an equation
  \begin{align}
    \Cov{\vtilde{x}_{k_1i_2},\vtilde{x}_{l_1j_2}}
    =\delta_{k_1l_1}\sum_{\sigma_2}\tilde{h}_{k_1,\sigma_2i_2}\wideconj{\tilde{h}_{k_1,\sigma_2j_2}}
  \end{align}
  holds.
\end{lemma}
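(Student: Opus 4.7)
The plan is to start from the spectral-form representation \eqref{eq:G1_stationarity_spec} already derived in the setup, which rewrites the $G_1$-stationarity filter as
\begin{equation*}
  \vtilde{x}_{k_1 i_2}=\sum_{\sigma_2}\vtilde{h}_{k_1,\sigma_2 i_2}\vtilde{z}_{k_1 \sigma_2},
\end{equation*}
and then compute the desired covariance by bilinearity. The key intermediate fact is that the partially Fourier-transformed noise $\vtilde{z}_{k_1 i_2}=\sum_{i_1}z_{i_1 i_2}\wideconj{\uid{u_{i_1 k_1}}{1}}$ is itself white jointly in its two indices. Once this is in hand, the double sum appearing in the bilinear expansion of $\Cov{\vtilde{x}_{k_1 i_2},\vtilde{x}_{l_1 j_2}}$ will collapse to the single sum over $\sigma_2$ claimed in the statement.

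Concretely, I would first verify that
\begin{equation*}
  \Cov{\vtilde{z}_{k_1 i_2},\vtilde{z}_{l_1 j_2}}=\delta_{k_1 l_1}\delta_{i_2 j_2}.
\end{equation*}
Expanding this covariance and applying the whiteness assumption $\Cov{z_{i_1 i_2},z_{j_1 j_2}}=\delta_{i_1 j_1}\delta_{i_2 j_2}$ reduces the right-hand side to $\delta_{i_2 j_2}\sum_{i_1}\wideconj{\uid{u_{i_1 k_1}}{1}}\uid{u_{i_1 l_1}}{1}$, which equals $\delta_{k_1 l_1}\delta_{i_2 j_2}$ by orthonormality of the columns of $\mat{U}_1$. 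This mirrors the white-noise spectral covariance calculation performed in the proof of \cref{lem:fgwise_stationary_spcov}, restricted to the first factor only. I would then substitute \eqref{eq:G1_stationarity_spec} into $\Cov{\vtilde{x}_{k_1 i_2},\vtilde{x}_{l_1 j_2}}$, distribute over the double sum by bilinearity of the covariance, and invoke the whiteness of $\vtilde{z}$ just established to force $k_1=l_1$ and $\sigma_2=\tau_2$. What survives is exactly $\delta_{k_1 l_1}\sum_{\sigma_2}\vtilde{h}_{k_1,\sigma_2 i_2}\wideconj{\vtilde{h}_{k_1,\sigma_2 j_2}}$.

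There is no real obstacle; the derivation is of the same flavor as the proof of \cref{lem:fgwise_stationary_spcov}, only with the partial (rather than full) spectral transform of the noise. The only care required is the complex-conjugation bookkeeping: because the $\vtilde{h}$ coefficients are complex and the covariance is conjugate-linear in its second argument, the coefficient pulled out of the second slot must be conjugated, which is precisely what produces the $\wideconj{\vtilde{h}_{k_1,\sigma_2 j_2}}$ appearing on the right-hand side of the claimed identity.
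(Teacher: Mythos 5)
Your proposal is correct and follows essentially the same route as the paper's own proof: both first establish the partial-spectral whiteness $\Cov{\vtilde{z}_{k_1i_2},\vtilde{z}_{l_1j_2}}=\delta_{k_1l_1}\delta_{i_2j_2}$ via orthonormality of the columns of $\mat{U}_1$, then expand $\Cov{\vtilde{x}_{k_1i_2},\vtilde{x}_{l_1j_2}}$ bilinearly from \cref{eq:G1_stationarity_spec} and collapse the double sum over $\sigma_2,\tau_2$. The conjugation bookkeeping you flag is handled identically in the paper.
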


\begin{proof}
  Using an equation
  \begin{align}
    &\Cov{\vtilde{z}_{k_1i_2},\vtilde{z}_{l_1j_2}}
    =\sum_{i_1}\sum_{j_1}\wideconj{\uid{u_{i_1k_1}}{1}}\uid{u_{j_1l_1}}{1}\EV{z_{i_1i_2}z_{j_1j_2}}
    \\
    &=\sum_{i_1}\sum_{j_1}\wideconj{\uid{u_{i_1k_1}}{1}}\uid{u_{j_1l_1}}{1}\delta_{i_1j_1}\delta_{i_2j_2}
    =\delta_{k_1l_1}\delta_{i_2j_2},
  \end{align}
  covariance of $\vtilde{x}_{k_1k_2}$ in \cref{eq:G1_stationarity_spec} is given by
  \begin{align}
    \Cov{\vtilde{x}_{k_1i_2},\vtilde{x}_{l_1j_2}}
    &=\sum_{\sigma_2}\sum_{\tau_2}\tilde{h}_{k_1,\sigma_2i_2}\wideconj{\tilde{h}_{l_1,\tau_2j_2}}\EV{\tilde{z}_{k_1\sigma_2}\wideconj{\tilde{z}_{l_1\tau_2}}}
    \\
    &=\delta_{k_1l_1}\sum_{\sigma_2}\tilde{h}_{k_1,\sigma_2i_2}\wideconj{\tilde{h}_{k_1,\sigma_2j_2}}.
  \end{align}
\end{proof}

\begin{proof}[Proof of \cref{th:directional_stationarity_nscond}(A)]
  [from $G_1$-stationarity to simultaneous diagonalizability]
  Assuming the signal $x$ is a $G_1$-stationary, a covariance matrix $\Cov{\app{x}{\placeholder,i_2},\app{x}{\placeholder,j_2}}$ is diagonalizable as
  \begin{align}
    \adjoint{\mat{U}_1}\Cov{\app{x}{\placeholder,i_2},\app{x}{\placeholder,j_2}}\mat{U}_1
    &=\Cov{\vtilde{\vec{x}}_{\nid i_2},\vtilde{\vec{x}}_{\nid j_2}}
    \\
    &=\app{\diag}{\sum_{s_2}\tilde{\vec{h}}_{s_2i_2}\hprod\wideconj{\tilde{\vec{h}}_{s_2j_2}}}
  \end{align}
  due to \cref{lem:G1_stationary_spcov}.

  [from simultaneous diagonalizability to $G_1$-stationarity]
  For a zero-mean signal $x$ on the graph $G_1\cprod G_2$, suppose that a covariance matrix $\Cov{\app{x}{\placeholder,i_2},\app{x}{\placeholder,j_2}}$ is simultaneously diagonalizable with $\mat{L}_1$, i.e., represented as
  \begin{align}
    \Cov{\app{x}{\placeholder,i_2},\app{x}{\placeholder,j_2}}
    =\mat{U}_1\paren{\diag\vec{\gamma}_{i_2j_2}}\adjoint{\mat{U}_1}
  \end{align}
  with some vector $\vec{\gamma}_{i_2j_2}$ for any $i_2,j_2=0,\ldots,N_2-1$.
  Suppose the eigenvalues $\uid{\lambda_0}{1},\ldots,\uid{\lambda_{N_1-1}}{1}$ are distinct.
  Define a matrix $\mat{\Gamma}_{k_1}$ whose $\seqprn{i_2,j_2}$-th element is the $k_1$-th element of the vector $\vec{\gamma}_{i_2j_2}$ for $k_1=0,\ldots,N_1-1$, and a matrix $\vtilde{\mat{H}}_{k_1}$ satisfying a Cholesky decomposition $\adjoint{\tilde{\mat{H}}_{k_1}}\tilde{\mat{H}}_{k_1}^{}=\trsps{\mat{\Gamma}_{k_1}}$.
  The matrix $\vtilde{\mat{H}}_{k_1}$ exists because $\mat{\Gamma}_{k_1}=\Cov{\vtilde{\vec{x}}_{k_1\nid},\vtilde{\vec{x}}_{k_1\nid}}$ is symmetric and positive-semidefinite.
  Then the signal $x$ can be represented like \cref{eq:fgwise_stationarity} with $\mat{H}_{k_1}$ defined by $\vec{h}_{i_2j_2}=\mat{\Psi}_1^{-1}\vtilde{\vec{h}}_{i_2j_2}$, where $\vec{h}_{i_2j_2}$ and $\vtilde{\vec{h}}_{i_2j_2}$ are vectors whose $k_1$-th element is the $\seqprn{i_2,j_2}$-th element of $\mat{H}_{k_1}$ and $\vtilde{\mat{H}}_{k_1}$, respectively.
  The vector $\vec{h}_{i_2j_2}$ exists because the matrix $\mat{\Psi}_1$ is invertible due to the distinctness of the eigenvalues.
  Therefore, $x$ is $G_1$-stationary.
\end{proof}


\end{document}